\newcommand{\bemph}[1]{\textbf{\textit{\boldmath #1}}}
\let\defn=\bemph
\newtheorem{theorem}{Theorem}[section]
\newtheorem{lemma}[theorem]{Lemma}
\newtheorem{corollary}[theorem]{Corollary}
\newcommand{\figref}[1]{Figure~\ref{#1}}
\newcommand{\thmref}[1]{Theorem~\ref{#1}}
\newcommand{\lemref}[1]{Lemma~\ref{#1}}
\newcommand{\PSPACE}{\textsc{PSPACE}\xspace}
 \gdef\xxxmark{%
   \expandafter\ifx\csname @mpargs\endcsname\relax 
     \expandafter\ifx\csname @captype\endcsname\relax 
       \marginpar{xxx}
     \else
       xxx 
     \fi
   \else
     xxx 
   \fi}
 \gdef\xxx{\@ifnextchar[\xxx@lab\xxx@nolab}
 \long\gdef\xxx@lab[#1]#2{\textbf{[\xxxmark #2 ---{\sc #1}]}}
 \long\gdef\xxx@nolab#1{\textbf{[\xxxmark #1]}}
 \long\gdef\xxx@lab[#1]#2{}\long\gdef\xxx@nolab#1{}%
\gdef\fps@figure{!htbp}}
\title{Walking through Doors is Hard, even without Staircases: \\
  Universality and PSPACE-hardness of Planar Door Gadgets%
  \thanks{A preliminary version of this paper appeared at FUN 2020~\cite{doors-fun2020}.}}
\author{%
  MIT Gadgets Group%
     \thanks{Artificial first author to highlight that the other authors (in
       alphabetical order) worked as an equal group. Please include all
       authors (including this one) in your bibliography, and refer to the
       authors as “MIT Gadgets Group” (without “et al.”).}
\and
  Jeffrey Bosboom%
    \thanks{MIT Computer Science and Artificial Intelligence Laboratory,
      32 Vassar St., Cambridge, MA 02139, USA.
      \protect\url{{bosboom,edemaine,diomidova,della,hayastl,jaysonl}@mit.edu}}
\and
  Erik D. Demaine\footnotemark[3]
\and
  Jenny Diomidova\footnotemark[3]
\and
  Della Hendrickson\footnotemark[3]
\and
  Hayashi Layers\footnotemark[3]
\and
  Jayson Lynch\footnotemark[3]
}
\date{}
\begin{document}

\maketitle

\begin{abstract}
  An open--close door gadget has two states and three tunnels that can be traversed by an
  agent (player, robot, etc.):
  the ``opening'' and ``closing'' tunnels set the gadget's state to open and
  closed, respectively, while the ``traverse'' tunnel can be traversed if and
  only if the door is in the open state.
  We prove that it is \PSPACE-complete to decide whether an agent can move from
  one location to another through a \emph{planar} system of any such door
  gadget, removing the traditional need for crossover gadgets and thereby
  simplifying past \PSPACE-hardness proofs of Lemmings and Nintendo games
  Super Mario Bros., Legend of Zelda, and Donkey Kong Country.
  Even stronger, we show that \emph{any gadget} in the
  motion-planning-through-gadgets framework can be simulated by a planar
  system of door gadgets: the open--close door gadget is a universal gadget.

  We prove that these results hold for a variety of door gadgets.
  In particular, the opening, closing, and traverse tunnel locations can have
  an arbitrary cyclic order around the door; each tunnel can be directed
  or undirected; and the opening tunnel can instead be an optional button
  (with identical entrance and exit locations).
  Furthermore, we show the same hardness and universality results for
  two simpler types of door gadgets.
  A \emph{self-closing} door gadget has two states and only two tunnels:
  ``opening'' which behaves like a door, and
  ``self-closing'' which can be traversed
  only when the door is open and whose traversal changes the state to closed
  (like a traverse tunnel followed by a closing tunnel).
  A \emph{symmetric} self-closing door also has two states and two tunnels:
  ``self-closing'' as above, and ``self-opening'' which can be traversed
  only when the door is closed and whose traversal changes the state to open;
  thus the states and tunnels are symmetric under an ``open''/``close'' swap.
  Again we show that any self-closing door gadget planarly simulates any
  gadget, and thus the reachability motion planning problem is
  \PSPACE-complete.
  Then we apply this framework to prove new \PSPACE-hardness results for
  eight different 3D Mario video games and Sokobond.
\end{abstract}

\section{Introduction}


Puzzle video games are rife with doors that block the player's passage
when closed/locked.  To open such a door, the player often needs to collect
the right key or keycard, or to press the right combination of buttons or
pressure plates, or to solve some other puzzle.  Many of these game features
in sufficient generality imply that the video game is NP-hard or \PSPACE-hard,
according to a series of ``metatheorems'' \cite{platform,gaming,Bloxorz}.

An intriguing twist is to use doors as a framework for proving hardness of
video games that do not ``naturally'' have doors, but have some mechanics that
suffice to simulate doors via a small construction or ``gadget''.
Our research was initially motivated by
a local ``door gadget'' introduced in Viglietta's FUN 2014 paper
\cite{lemmings} proving Lemmings \PSPACE-complete.
This door gadget is a portion of a level design containing three directed paths
that the player can traverse: a ``traverse'' path that can be traversed if and
only if the door is open, a ``closing'' path that forces the door to close,
and an ``opening'' path that allows the player to open the door if desired.
Viglietta \cite[Metatheorem~3]{lemmings} proved that such a door gadget,
together with the ability to wire together door entrance/exit locations according
to an arbitrary graph (including crossovers for a 2D game like Lemmings),
where the player has the choice of how to traverse the graph,
suffice to prove \PSPACE-hardness of deciding whether an agent can move from
one location to another.
At the same FUN (2014), Aloupis et al.~\cite{nintendoor}
used this door framework to prove Legend of Zelda: Link to the Past
and Donkey Kong Country 1, 2, and 3 \PSPACE-complete.
At the next FUN (2016), Demaine et al.~\cite{demaine2016super}
used this door framework to prove Super Mario Bros.\ \PSPACE-complete.
All of these reductions also build a crossover gadget for wiring paths between
door gadgets.

In this paper, we show that such crossover gadgets are unnecessary:
just building the door gadget, together with the ability to connect together
the entrance/exit locations in a planar way,
suffices to prove \PSPACE-hardness.
Furthermore, we show that door gadgets have a much stronger \emph{universality}
property, in a general framework which we now introduce.

\subsection{Motion Planning through Gadgets}

The \defn{motion-planning-through-gadgets framework}
was initiated at FUN 2018 \cite{gadgets}
and developed in a series of papers
\cite{gadgets2,pullingBlocks,push1f,IOgadgets,GadgetsVictory,demaine2025pspace,lynch-thesis,della-masters,ani-masters}.
It formalizes the idea of one agent (or more)
moving through a graph of local gadgets, where each gadget has local state and
possible traversals (depending on the state),
and whose traversal affects that gadget's state (only).
\figref{fig:state-diagram} shows an example of a gadget
in this framework.
In general, a \defn{gadget} consists of three finite sets ---
\defn{states}~$Q$, \defn{locations}~$L$, and \defn{transitions}~$T$ ---
where each transition $\in T$ is of the form $(q,a) \to (r,b)$
where $q,r \in Q$ and $a,b \in L$.
The meaning of $(q,a) \to (r,b) \in T$ is that, when the gadget is in state~$q$,
the agent can enter the gadget at location~$a$,
change the gadget's state to~$r$,
and then exit the gadget at location~$b$;
we call this agent action a \defn{traversal} $a \to b$.
A \defn{traversal sequence} $[a_1\to b_1,\dots,a_k\to b_k]$
on the locations $L$ is \defn{legal} from state $q_0$
if there is a corresponding sequence of transitions
$[(q_0,a_1) \to (q_1,b_1), \dots, (q_{k-1},a_k) \to (q_k,b_k)]$,
where each start state of each transition matches the end state
of the previous transition ($q_0$ for the first transition).

\begin{figure}
  \centering
  \includegraphics{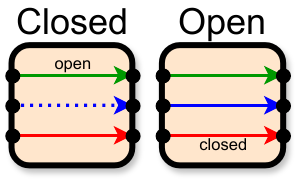}
  \caption{State diagram of a 2-state 6-location 5-transition gadget
    (the directed open-tunneled open--close door
    from \figref{fig:different-doors-directed}).
    The left and right copies represent the two states
    (``closed'' and ``open'').
    In each state, we draw the allowed transitions as arrows,
    labeled with the resulting state change, if any.
    Dotted transitions are forbidden and are just drawn for intuition
    (and to clarify locations).
  }
  \label{fig:state-diagram}
\end{figure}

A \defn{system} of gadgets consists of a finite set of gadgets
(which are often all instances of a single gadget type~$G$),
an initial state for each gadget, and a \defn{connection graph} whose vertices
are the locations of all the gadgets.
The agent can freely traverse edges of the connection graph
without any state change, so each connected component
of the connection graph acts like a single location of the system.
The \defn{reachability problem} asks, given a system of gadgets
and two locations $s,t \in L$, whether the agent can move from $s$ to $t$
via a sequence of connection-graph edges and legal gadget traversals.
More precisely, a \defn{system traversal} $a_1 \to^* b_k$
is a sequence of traversals $[a_1\to b_1,\dots,a_k\to b_k]$,
each on a potentially different gadget in the system,
where the connection graph has a path from $b_i$ to $a_{i+1}$ for each~$i$;
and it is \defn{legal} if the subsequence of all traversals through any
particular gadget $G$ is legal as defined above,
from the initial state of $G$ assigned by the system,
for every $G$ in the system.

In \defn{planar motion planning},
each gadget additionally has a specified cyclic ordering of its vertices,
and the system of gadgets is embedded in the plane without intersections.
More precisely, a system of gadgets is \defn{planar} if the following
construction produces a planar graph:
(1)~replace each gadget with a wheel graph,
which has a cycle of vertices corresponding to the locations on the gadget
in the appropriate order, and a central vertex connected to each location;
and (2)~connect locations on these wheels with edges
according to the connection graph.
In \defn{planar reachability},
we restrict to planar systems of gadgets.
Note that this definition allows rotations and reflections of gadgets,
but no other permutation of their locations.


As shown in \figref{fig:state-diagram},
door gadgets naturally fit into this motion-planning-through-gadgets framework.
(Indeed, doors were one of the inspirations for the framework.)
There are also several \PSPACE-completeness results for reachability and/or
planar reachability with various families of gadgets
\cite{gadgets,gadgets2,pullingBlocks,push1f,IOgadgets,GadgetsVictory,demaine2025pspace,lynch-thesis,della-masters,ani-masters},
but none of them apply to door gadgets.
For example, the original two papers \cite{gadgets,gadgets2}
characterized the complexity of [planar] reachability for gadgets that are
(1)~\defn{deterministic}, meaning that when an agent enters a gadget at some
location, it has a unique exit location and causes a unique state change;
(2)~\defn{reversible}, meaning that every transition can be immediately undone;
and
(3)~\defn{$k$-tunnel}, meaning that the $2k$ locations can be
paired up such that all traversals move the agent between paired locations
(in some direction).
Some door gadgets, such as the one in \figref{fig:state-diagram},
are both deterministic and $k$-tunnel, but none of them are reversible:
a door cannot be reopened by traversing the closing tunnel backwards.
Nonetheless, we will show \PSPACE-completeness of planar reachability
with doors, along with a stronger property called ``universality'',
defined in terms of simulations.
Here we follow the definitions of \cite{push1f,IOgadgets,ani-masters};
see also \cite{della-masters}.

A set $\mathcal G$ of gadgets is [planarly] \defn{universal}
for a class $\mathcal C$ of gadgets
if $\mathcal G$ can [planarly] simulate every gadget in~$\mathcal C$,
as defined next.
(Typically, we also require that the simulating gadgets $\mathcal G$
are in the class, i.e., $\mathcal G \subseteq \mathcal C$.)

A set $\mathcal G$ of gadgets can \defn{simulate} another gadget $H$
if it is possible to build a system of gadgets in $\mathcal G$ that
``acts like'' $H$ when we restrict attention to a subset of locations.
More precisely, a \defn{simulation} of a gadget $H$ in state $q$
by gadgets in $\mathcal G$
consists of a system of gadgets in $\mathcal G$, and an injective mapping $m$
from every location of $H$ to a distinct location in the system, such that
a traversal sequence $[a_1 \to b_1, \dots, a_k \to b_k]$
on the locations in $H$ is legal from state $q$
if and only if there exists a sequence of system traversals
$m(a_1) \to^* m(b_1), \dots, m(a_k) \to^* m(b_k)$
that is legal in the sense above:
the subsequence of all traversals through any
particular gadget $G$ is legal,
from the initial state of $G$ assigned by the system,
for every $G$ in the system.
A \defn{planar simulation} of a gadget $H$
is a planar system and mapping $m$ with the same properties,
and where the cyclic order of locations of $H$ maps via $m$ to
locations in cyclic order around the outside face of the system.
A [planar] simulation of an entire gadget $H$ consists of
a [planar] simulation of $H$ in state $q$, for all states $q$ of~$H$,
that differ only in their assignments of initial states.

A [planar] simulation of a gadget $H$ using gadgets in $\mathcal G$
in particular provides a reduction from [planar] reachability
with $\mathcal G \cup \{H\}$ to [planar] reachability with $\mathcal G$.
Thus, if [planar] reachability is \PSPACE-hard with
$\mathcal G \cup \{H\}$, then it is also \PSPACE-hard with $\mathcal G$.


\subsection{Door Gadgets}

In this paper, we develop a motion-planning-through-\emph{doors} framework,
completing another subspace of the motion-planning-through-gadgets framework.
Our framework applies to a variety of different door gadgets, including the
door gadget of \cite{lemmings,nintendoor,demaine2016super}
and other examples in \figref{fig:different-doors}.
Typically, an \defn{open--close door gadget}%
\footnote{The conference version of this paper \cite{doors-fun2020}
  referred to open--close doors as simply ``doors'', but this made it difficult
  to distinguish from other types of (self-closing) doors we also introduce.
  The more descriptive name ``open-close doors'' is also used in
  \cite{demaine2016super,janggi}
  (but \cite{lemmings,nintendoor} use ``doors'').}
has two states (``open'' and ``closed'')
and three tunnels: ``traverse'', ``closing'', and ``opening''.
Each tunnel may be individually \defn{directed} (traversable in one direction) or
\defn{undirected} (traversable in both directions).
In addition, the opening traversal may have identical entrance and exit
locations, turning it into an opening \defn{button} instead of a tunnel.
In this case, the traversal changes the door's state but does not
move the agent (breaking the $k$-tunnel assumption).

\begin{figure}
  \centering
  \subcaptionbox{\label{fig:different-doors-directed} A directed open-tunneled open/\allowbreak close door.}{\includegraphics[scale=1]{different-doors-directed}}
  \hfill
  \subcaptionbox{\label{fig:different-doors-undirected} An undirected open-tunneled open/\allowbreak close door.}{\includegraphics[scale=1]{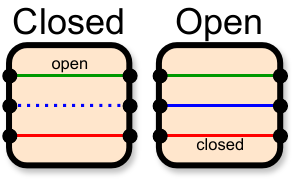}}
  \hfill
  \subcaptionbox{\label{fig:different-doors-mixed} A mixed open-buttoned open/\allowbreak close door.}{\includegraphics[scale=1]{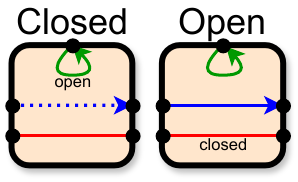}}
  \caption{State diagrams for a few different open--close doors.
    The opening button/tunnel is green,
    the traverse tunnel is blue, and the closing tunnel is red.
  }
  \label{fig:different-doors}
\end{figure}

It does not make sense to talk about closing or traverse buttons, because the agent will never have an incentive to use them.
Similarly, we can assume without loss of generality that traversing the opening tunnel/button deterministically opens the door, because it is never beneficial to leave the door closed when you have the option to open it.

We introduce two more families of door gadgets,
illustrated in \figref{fig:self-closing-doors}.
A \defn{self-closing door} has two states
but only two tunnels: ``opening'' and ``self-closing''.
The self-closing traversal is possible only in the open state,
and it forcibly changes the state to closed.
As before, each tunnel can be either directed or undirected;
and the opening traversal forces the state to open,
but we allow the opening traversal to have identical start and end
locations, which effectively allows optional opening.
A \defn{symmetric self-closing door} has two states and two tunnels:
``self-opening'' and ``self-closing''.
The self-opening/self-closing traversal is possible only in the closed/open state,
respectively, and it forcibly changes the state to open/closed, respectively.
(This definition is fully symmetric between ``open'' and ``close''.)

\begin{figure}
  \centering
  \subcaptionbox{\label{fig:self-closing-doors-undirected} An undirected open-tunneled self-closing door.}{\includegraphics[scale=1]{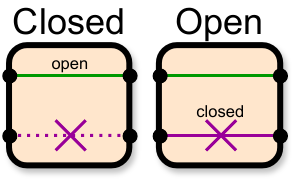}}
  \hfill
  \subcaptionbox{\label{fig:self-closing-doors-directed-button} A directed open-buttoned self-closing door.}{\includegraphics[scale=1]{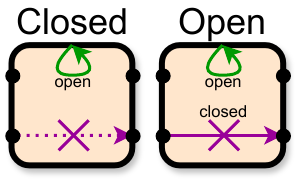}}
  \hfill
  \subcaptionbox{\label{fig:self-closing-doors-mixed} A mixed self-closing door.}{\includegraphics[scale=1]{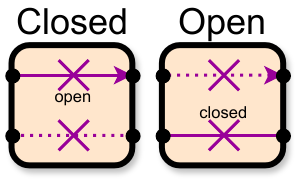}}
  \caption{State diagrams for a few different self-closing doors.
    The opening button/tunnel is green,
    and self-opening/self-closing tunnel(s) are purple
    with an X (to indicate that they close themselves when traversed).
    Dotted traversals are forbidden.
  }
  \label{fig:self-closing-doors}
\end{figure}

To illustrate how these various door gadgets work,
and how they could be built in an actual puzzle game,
\figref{fig:sokobond-doors} shows how to implement one of each in
a pushing-block game that supports pushing a polyomino piece.
More precisely, a $1 \times 1$ agent can traverse the unit-square grid,
polyominoes (indicated by a string of circled Hs and Os) can be pushed
by the agent in the agent's direction of motion, and
square blocks are fixed (cannot be pushed).

\begin{figure}
  \centering
  \footnotesize

  \subcaptionbox{\label{fig:sokobond-door} Undirected open-buttoned open--close door, open state}{%
    \hspace*{4.5em}%
    \begin{overpic}[scale=0.6]{sokobond/sokobond-door}
      \put(12,67){\makebox(0,0)[r]{\strut traverse in}}
      \put(89,67){\makebox(0,0)[l]{\strut traverse out}}
      \put(12,44){\makebox(0,0)[r]{\strut opening button}}
      \put(56,25){\makebox(0,0){\rotatebox{90}{\strut closing in}}}
      \put(80,25){\makebox(0,0){\rotatebox{90}{\strut closing out}}}
    \end{overpic}%
    \hspace*{4.5em}%
  }
  \hfil
  \subcaptionbox{Undirected open-buttoned open--close door, closed state}{%
    \hspace*{4.5em}%
    \begin{overpic}[scale=0.6]{sokobond/sokobond-door-closed}
      \put(12,67){\makebox(0,0)[r]{\strut traverse in}}
      \put(89,67){\makebox(0,0)[l]{\strut traverse out}}
      \put(12,44){\makebox(0,0)[r]{\strut opening button}}
      \put(56,25){\makebox(0,0){\rotatebox{90}{\strut closing in}}}
      \put(80,25){\makebox(0,0){\rotatebox{90}{\strut closing out}}}
    \end{overpic}%
    \hspace*{4.5em}%
  }

  \medskip

  \subcaptionbox{Directed open-buttoned self-closing door, open state}{\vbox{%
    \vspace*{3.0em}%
    \hbox{%
      \hspace*{4.8em}%
      \begin{overpic}[scale=0.6]{sokobond/sokobond-scd}
        \put(19,80){\makebox(0,0){\rotatebox{90}{\strut self-closing in}}}
        \put(11,39){\makebox(0,0)[r]{\strut opening button}}
        \put(50.5,18){\makebox(0,0){\rotatebox{90}{\strut self-closing out}}}
      \end{overpic}%
      \hspace*{4.8em}%
    }%
    \vspace*{0.4em}%
  }}
  \hfil
  \subcaptionbox{Directed open-buttoned self-closing door, closed state}{\vbox{%
    \vspace*{3.0em}%
    \hbox{%
      \hspace*{4.8em}%
      \begin{overpic}[scale=0.6]{sokobond/sokobond-scd-closed}
        \put(19,80){\makebox(0,0){\rotatebox{90}{\strut self-closing in}}}
        \put(11,39){\makebox(0,0)[r]{\strut opening button}}
        \put(50.5,18){\makebox(0,0){\rotatebox{90}{\strut self-closing out}}}
      \end{overpic}%
      \hspace*{4.8em}%
    }%
    \vspace*{0.4em}%
  }}

  \medskip

  \subcaptionbox{Undirected symmetric self-closing door, open state}{\vbox{%
    \vspace*{2.8em}%
    \hbox{%
      \hspace*{1.2em}%
      \begin{overpic}[scale=0.6]{sokobond/sokobond-sscd}
        \put(28,53){\makebox(0,0){\rotatebox{90}{\strut self-opening out}}}
        \put(43,56){\makebox(0,0){\rotatebox{90}{\strut self-closing in}}}
        \put(58,-1.5){\makebox(0,0){\rotatebox{90}{\strut self-opening in}}}
        \put(73,-1){\makebox(0,0){\rotatebox{90}{\strut self-closing out}}}
      \end{overpic}%
      \hspace*{1.2em}%
    }%
    \vspace*{3.2em}%
  }}
  \hfil
  \subcaptionbox{Undirected symmetric self-closing door, closed state}{\vbox{%
    \vspace*{2.8em}%
    \hbox{%
      \hspace*{1.2em}%
      \begin{overpic}[scale=0.6]{sokobond/sokobond-sscd-closed}
        \put(28,53){\makebox(0,0){\rotatebox{90}{\strut self-opening out}}}
        \put(43,56){\makebox(0,0){\rotatebox{90}{\strut self-closing in}}}
        \put(58,-1.5){\makebox(0,0){\rotatebox{90}{\strut self-opening in}}}
        \put(73,-1){\makebox(0,0){\rotatebox{90}{\strut self-closing out}}}
      \end{overpic}%
      \hspace*{1.2em}%
    }%
    \vspace*{3.2em}%
  }}

  \caption{Implementation of an open--close door (a--b), self-closing door (c--d), and
    symmetric self-closing door (e--f) in any pushing-block game
    that allows pushing a polyomino piece
    (drawn here as a molecule, in the style of Sokobond \cite{Sokobond}).
    The symmetric self-closing door gadget is itself
    $180^\circ$ rotationally symmetric.}
  \label{fig:sokobond-doors}
\end{figure}

\subsection{Our Results}

Our main result is that every door gadget mentioned above
(open--close, self-closing, and symmetric self-closing,
of every variation) is planarly universal
for the class of \emph{all gadgets}.
These are the first examples of fully universal gadgets:
door gadgets are in this sense among the most powerful gadgets.
Section~\ref{sec:scd} proves this result for self-closing and
symmetric self-closing doors, and then Section~\ref{sec:doors}
extends the result to open--close doors,
in many cases by simulating self-closing doors.
Because we already know that planar reachability is \PSPACE-hard
for \emph{some} gadgets \cite{gadgets2},
we conclude that planar reachability is \PSPACE-hard for \emph{any} door gadget.

%

Our result should in particular make it easier to prove 2D games \PSPACE-hard:
it suffices to construct any single open--close door, self-closing door, or
symmetric self-closing door gadget,
and to show how to connect the door entrances/exits together in a planar graph.
For example, the crossover gadgets previously constructed for
Lemmings \cite[Figure~2(e)]{lemmings},
Legend of Zelda: Link to the Past and Donkey Kong Country 1, 2, and 3
\cite[Figures~28 and~20]{nintendoor},
and Super Mario Bros.~\cite[Figure~5]{demaine2016super}
are no longer necessary for those \PSPACE-hardness proofs:
they can now be omitted.
(See Section~\ref{sec:simplifications} for details.)
%

In Section~\ref{sec:applications}, we apply this approach by proving new
\PSPACE-hardness results
for one 2D game, Sokobond, and eight different 3D Mario games:
Super Mario 64, Super Mario 64 DS, Super Mario Sunshine,
Super Mario Galaxy, Super Mario Galaxy 2, Super Mario 3D Land/World,
Super Mario Odyssey, and Captain Toad:\ Treasure Tracker
(and the associated levels in Super Mario 3D World).
These reductions consist of just one gadget ---
any one of the doors from \figref{fig:sokobond-doors} for Sokobond,
and a symmetric self-closing door for the 3D games ---
along with easy methods for connecting these gadgets.
For the 3D games, the main benefit is the simplicity of the symmetric
self-closing door: crossovers are generally easy in the 3D games, though
it remains convenient that we do not need to explicitly build them.

\subsection{Related Work}

Recently, we realized that Dor and Zwick \cite[Section~2]{sliding-door}
introduced one version of the door
(specifically, a directed open-buttoned open--close door)
back in 1999, under the name ``sliding door''.%
\footnote{We also can't help but notice the first author's appropriate name:
  Dorit Dor.}
They built this gadget in a generalized form of the pushing-block game Sokoban,
called SOKOBAN$^+$,
which we might today call PushPull?-2FS with $1 \times 2$ blocks
(see \cite{pullingBlocks} for relevant terminology).
They also built one other gadget we call a ``diode''
(see Section~\ref{sec:diode}),
and used these two gadgets to build a direct simulation of Turing machines,
proving \PSPACE-hardness of the game.
Along the way, they showed that two of their door gadgets and two
diode gadgets can build a crossover gadget.

Interpreted in the motion-planning-through-gadgets framework,
their construction shows \PSPACE-hardness of planar reachability with
the diode and one planar layout of the directed open-buttoned open--close door.
Specifically, the planar layout is Case 10: OTcCt of
Section~\ref{sec:door-2-10-12}, which is one of our easiest open--close door cases
because it can simulate a self-closing door by connecting the traverse
output location to the closing input location.
In fact, self-closing doors can in turn easily simulate a diode
(as shown in Lemma~\ref{lem:sim-diode}),
so their construction does not need a diode.
(This fact was not useful to Dor and Zwick \cite{sliding-door}
because they used their diode to build their door.)
Thus, we obtain another proof of \PSPACE-hardness of planar reachability with
just the Case-10 open--close door.
In fact, their simulation of an arbitrary Turing machine is relatively easy
to adapt into a simulation of an arbitrary gadget,
giving another proof of universality for this case.

\section{Self-Closing Doors}
\label{sec:scd}
In this section, we define two types of gadgets --- \bemph{self-closing doors} and \bemph{symmetric self-closing doors} --- and show that each of these gadgets is planarly universal and \PSPACE-complete. Unlike open--close doors, which have $5$ or $6$ locations, these gadgets only have $3$ or $4$, making analysis much simpler.

\subsection{Types of Self-Closing Doors}

A \bemph{self-closing door} is a $2$-state gadget that has a tunnel that closes itself when traversed
(the \emph{self-closing} tunnel), and a tunnel or button
that reopens said tunnel (the \emph{opening} tunnel/button).
A self-closing door is \bemph{open-tunneled} if it has an opening tunnel, and \bemph{open-buttoned} if it has an opening button.
A \bemph{symmetric self-closing door} is a $2$-state gadget that has two tunnels, exactly one of which is open at a time, and traversing either tunnel closes that tunnel and reopens the other tunnel.
\figref{fig:self-closing-doors} gives the state diagram for some self-closing doors.

Each of the (one or two) tunnels in a [symmetric] self-closing door can be either directed or undirected.
We call a [symmetric] self-closing door \bemph{directed} if all its tunnels are directed, \bemph{undirected} if all its tunnels are undirected, and \bemph{mixed} if it has both directed and undirected tunnels (in which case it cannot be an open-buttoned self-closing door).

Throughout this paper, the opening button/tunnel will be colored green, and a self-closing tunnel will be colored purple with an `X' over it.
A dotted line indicates a closed tunnel and a solid line indicates an open tunnel.
A tunnel without an arrowhead is undirected, while a tunnel with a solid arrowhead is directed (in the direction of the arrowhead).
Sometimes a construction works both for directed and undirected tunnels, in which case we use hollow arrowheads.
A tunnel with a hollow arrowhead may be undirected or directed in that direction.
A tunnel with hollow arrowheads on both ends may be undirected or directed in either direction.

First we show that these many types of [symmetric] self-closing doors all planarly simulate one ``simplest'' self-closing door, so for the purposes of universality and hardness, we can focus on analyzing just it.

\begin{lemma}\label{lem:canonical-scd}
  Any self-closing door or symmetric self-closing door planarly simulates the directed open-buttoned self-closing door.
\end{lemma}

\begin{proof}
  First, we can simulate some open-buttoned self-closing door (directed or undirected) as shown in \figref{fig:scd-to-open-buttoned}. If it is directed, we are done. If it is undirected, we can simulate the directed open-buttoned self-closing door as shown in \figref{fig:scd-to-directed}.
\end{proof}

\begin{figure}
  \centering
  \includegraphics[scale=.8]{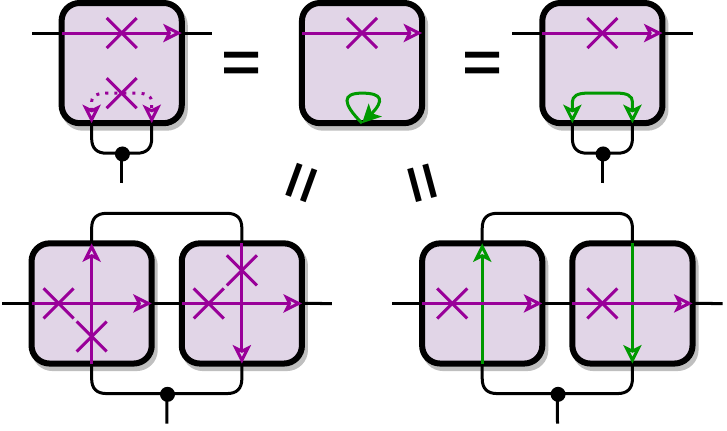}
  \caption{Any self-closing door or symmetric self-closing door planarly simulates an \emph{open-buttoned} self-closing door.}
  \label{fig:scd-to-open-buttoned}
\end{figure}

\begin{figure}
  \centering
  \includegraphics[scale=.8]{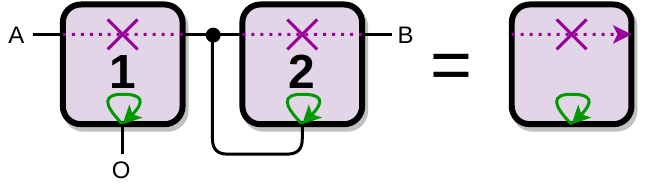}
  \caption{The \emph{undirected} open-buttoned self-closing door planarly simulates the \emph{directed} open-buttoned self-closing door.}
  \label{fig:scd-to-directed}
\end{figure}

From now on, we will refer to the directed open-buttoned self-closing door as simply ``the self-closing door''. Any result of the form ``the self-closing door simulates $G$'' can be extended to arbitrary self-closing and symmetric self-closing doors using \lemref{lem:canonical-scd}.

\subsection{Diode}
\label{sec:diode}

Next we show how to build a \defn{diode},
which is a 1-state gadget with a single directed tunnel.

\begin{lemma}\label{lem:sim-diode}
  The self-closing door planarly simulates a diode.
\end{lemma}

\begin{proof}
  See \figref{fig:scd-diode}.
\end{proof}

\begin{figure}
  \centering
  \begin{subfigure}{0.45\textwidth}
    \centering
    \includegraphics[scale=.8]{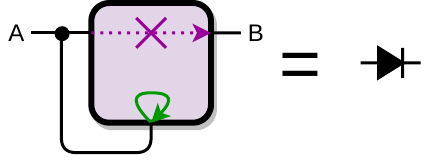}
    \caption{Simulating a diode.}
    \label{fig:scd-diode}
  \end{subfigure}
  \begin{subfigure}{0.45\textwidth}
    \centering
    \includegraphics[scale=.8]{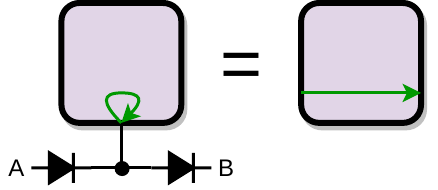}
    \caption{Simulating a directed opening tunnel.}
    \label{fig:scd-unbutton}
  \end{subfigure}
  \caption{Directed \emph{open-buttoned} self-closing door planarly simulates a parallel and an antiparallel directed \emph{open-tunneled} self-closing door.}
\end{figure}

We can use diodes to turn an opening button of any gadget into a directed tunnel (in either direction), as shown in \figref{fig:scd-unbutton}. We will use this implicitly in many of our constructions.

\subsection{Universality}
\label{sec:scd-universality}
In this section, we show that the self-closing door is universal, \emph{ignoring planarity}. We will recover planarity in the next section.

As a useful first step, we show how to make a self-closing door with many opening and self-closing tunnels:

\begin{lemma}\label{lem:scd-edge-duplicator}
  The self-closing door can simulate a version of the directed open-tunneled self-closing door with the tunnels duplicated arbitrarily many times.
\end{lemma}
\begin{proof}
  We use the simulation shown in \figref{fig:scd-edge-duplicator}. The figure shows $2$ opening tunnels and $2$ self-closing tunnels, but the same construction works with any number of tunnels.
  
  Suppose the agent enters the gadget through $A_i$. The only thing the agent can do is open door~$i$, go through door $0$, and self-close some door $j$ and leave through $B_j$. However, at this point, the only open door is door $i$, so $j = i$.
\end{proof}

\begin{figure}
  \centering
  \includegraphics[scale=.8]{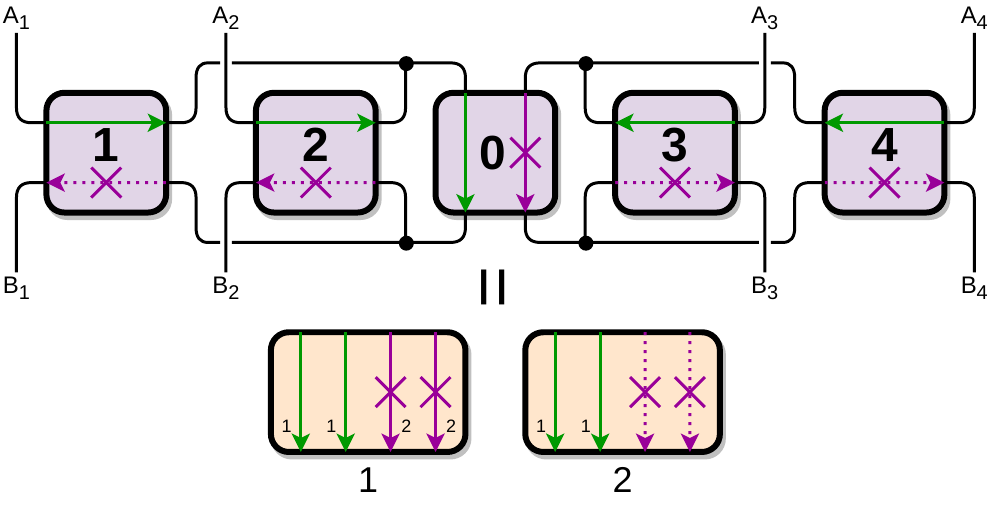}
  \caption{The directed open-tunneled self-closing door simulates a version of itself with the tunnels duplicated.}
  \label{fig:scd-edge-duplicator}
\end{figure}

\begin{theorem}\label{thm:scd-universal}
  Any self-closing door or symmetric self-closing door is universal, meaning it can simulate any gadget.
\end{theorem}

\begin{proof}
  Consider an arbitrary gadget $G$, with a set $L$ of locations, a set $S$ of states, an initial state $s_0$, and a set $T \subseteq S \times L \times S \times L$ of allowed transitions.

  For each state, we place a directed open-tunneled self-closing door $D_s$ with $|T|$ opening tunnels and $|T|$ self-closing tunnels as shown in \figref{fig:scd-universality-example}
  (implemented using Lemmas~\ref{lem:canonical-scd} and \ref{lem:scd-edge-duplicator}).
  The door $D_{s_0}$ corresponding to the starting state starts open, and all other doors start closed.

  For each transition $t = (s, l, s', l') \in T$, we add a path that starts at $l$, goes through a not-yet-used self-closing tunnel of door $D_s$, through a not-yet-used opening tunnel of door $D_{s'}$, and ends at $l'$.

  It is easy to see that this gadget behaves exactly like $G$.
\end{proof}

\begin{figure}
  \centering
  \includegraphics[scale=.8]{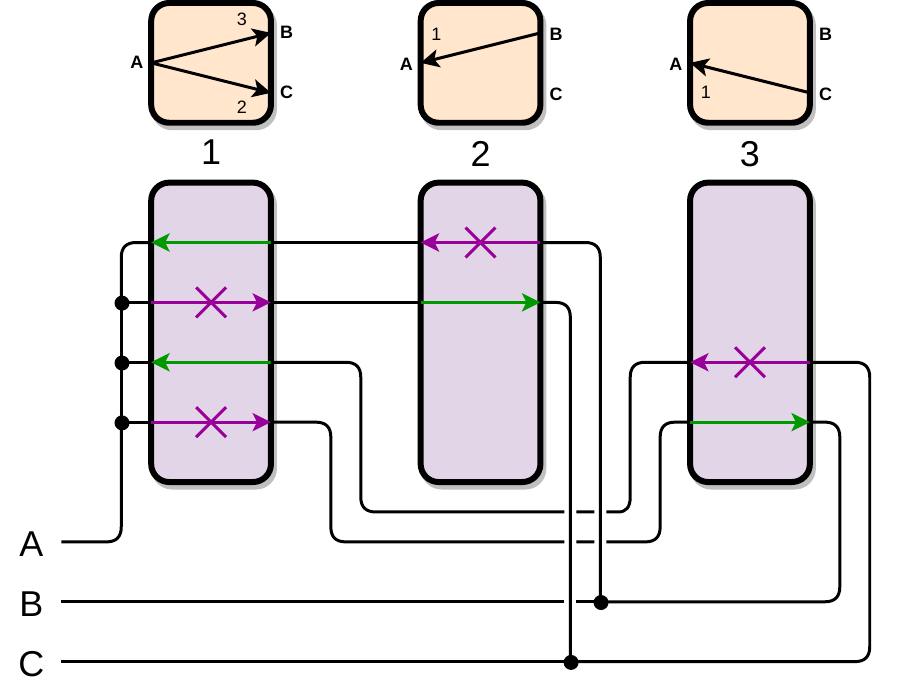}
  \caption{The directed open-tunneled self-closing door with duplicated tunnels simulates an arbitrary gadget.}
  \label{fig:scd-universality-example}
\end{figure}

\subsection{Planar Universality}
\label{sec:scd-planar-universality}

In this section, we show that the universality of \thmref{thm:scd-universal}
holds even in the planar setting.
It suffices to build a crossover gadget.
Our crossover was also given in \cite{pullingBlocks},
but we repeat it here for completeness.

We cannot duplicate tunnels as in \lemref{lem:scd-edge-duplicator}, as that construction is not planar; however, we can duplicate buttons:

\begin{lemma}\label{lem:scd-port-duplicator}
  The self-closing door simulates a directed open-buttoned self-closing door with two opening buttons on the same side of the self-closing tunnel.
\end{lemma}

\begin{proof}
  \figref{fig:scd-port-duplicator} shows the simulation.
  The agent enters $O_i$, opens door $i$, has a chance to open door $0$, and then must exit back through $O_i$, closing door $i$.
\end{proof}

We can then use diodes to turn some of these buttons into directed tunnels if necessary.

\begin{figure}
  \centering
  \includegraphics[scale=.8]{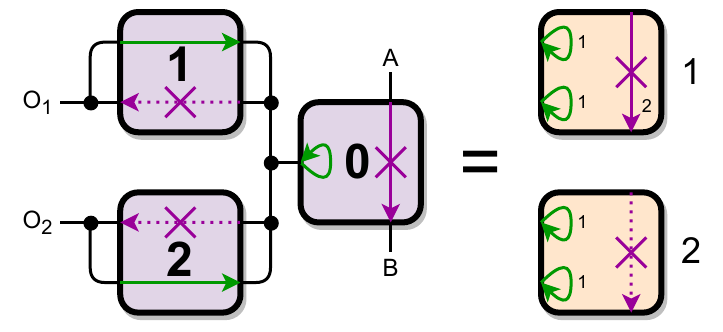}
  \caption{Self-closing doors planarly simulate a directed open-buttoned self-closing door with two opening buttons.}
  \label{fig:scd-port-duplicator}
\end{figure}

\begin{lemma}\label{lem:scd-crossover}
  The self-closing door planarly simulates a directed crossover.
\end{lemma}

\begin{proof}
  First we can simulate a pair of overlapping self-closing doors using the monstrous construction shown in \figref{fig:scd-monster}. If the agent enters from $O_1'$ or $O_2'$, they can open door $5$ or $6$, respectively, and then leave. We claim that the same is also true for $O_1$ and $O_2$.

  Suppose the agent enters from $O_1$. Then they open doors $2$, $3$, and $4$ and go through the diode. After this they have to choose between going up through door $3$ and going down through $4$. Assume they then traverse door $4$. If they then open door $6$, they would have to traverse door $3$, open $5$, and get stuck. So instead of opening door $6$, the agent traverses door $2$, ending up back at $O_1$ with no change except that door $3$ is open. However, this is completely useless, because the agent would be able to open $3$ anyway the next time they enter $O_1$ or $O_2$.
  
  So instead of going down through door $4$, the agent goes up through door $3$. The agent is then forced to go right and open door $5$, and traverse door $4$. If the agent opens door $6$, they will be stuck, so the agent traverses door $2$ instead and returns to $O_1$, leaving door $5$ open.
  
  Similarly, if the agent enters from $O_2$, the only useful thing they can do is open door $6$ and return to $O_2$. This means that $O_1$ and $O_1'$ act like opening buttons for $A_1 \to B_1$, and $O_2$ and $O_2'$ act like opening buttons for $A_2 \to B_2$.

  \begin{figure}
    \centering
    \includegraphics[scale=.8]{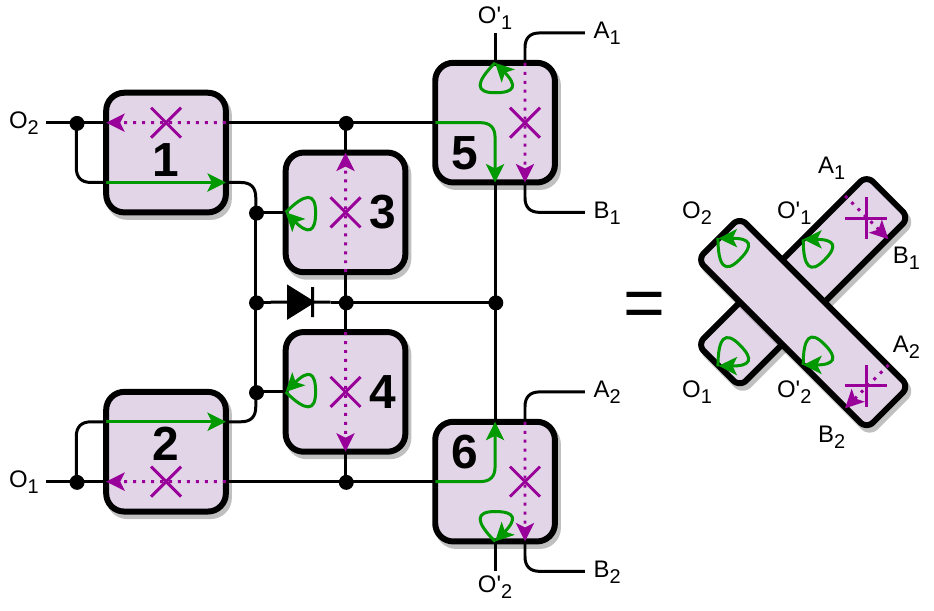}
    \caption{Directed self-closing door planarly simulates a pair of overlapping self-closing doors.}
    \label{fig:scd-monster}
  \end{figure}

  Next we connect two copies of this gadget as shown in \figref{fig:scd-dir-crossover}. Suppose the agent enters at $A_1$. They open door $1$. Now they have a choice of whether to open door $2$ or door $4$. If they open door $2$, then they open door $1$ again, traverse door $1$, and get stuck on door $3$. So instead they open doors $4$ and $3$, and traverse doors $1$ and $3$. At this point door $2$ is closed but door $4$ is open, so the agent must traverse door 4 and exit through $B_1$.

  \begin{figure}
    \centering
    \includegraphics[scale=.8]{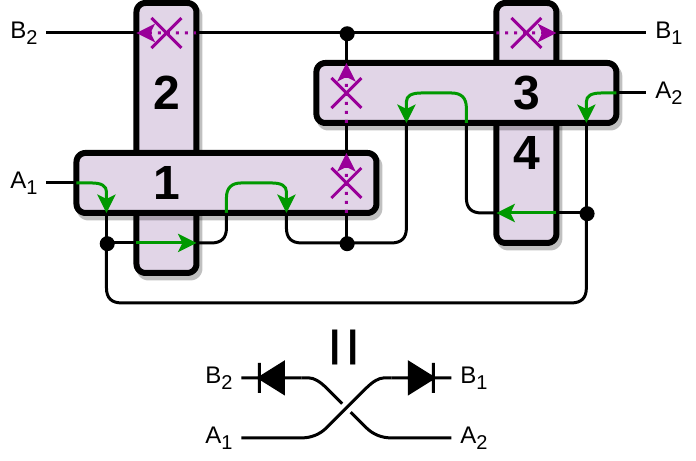}
    \caption{Overlapping directed self-closing door planarly simulates a directed crossover.}
    \label{fig:scd-dir-crossover}
  \end{figure}

  Finally, 4 copies of a directed crossover simulate an undirected one as shown in \figref{fig:dir-crossover}.
\end{proof}

\begin{figure}
  \centering
  \includegraphics[scale=.8]{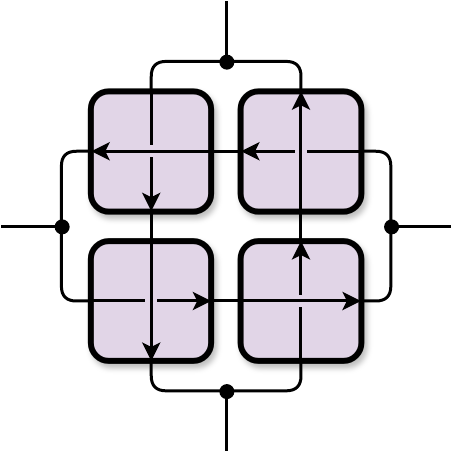}
  \caption{Directed crossover simulating an undirected crossover.}
  \label{fig:dir-crossover}
\end{figure}

\begin{theorem}\label{thm:planar-scd-universal}
  Any self-closing door or symmetric self-closing door is planarly universal, meaning it can planarly simulate any gadget.
\end{theorem}

\begin{proof}
  This follows from Lemmas~\ref{lem:canonical-scd} and~\ref{lem:scd-crossover} together with \thmref{thm:scd-universal}.
\end{proof}

\begin{corollary}\label{thm:planar-scd-pspace}
  Planar reachability with any self-closing door or symmetric self-closing door is \PSPACE-complete.
\end{corollary}

\begin{proof}
  1-player reachability is known to be in \PSPACE in general,
  and \PSPACE-hard for at least one set of gadgets
  \cite{gadgets,gadgets2}.
\end{proof}

\section{Open--Close Doors}
\label{sec:doors}

In this section, we show that any open--close door is planarly universal, and therefore planar reachability with any open--close door is \PSPACE-complete.
First we define the gadget formally.

An \bemph{open--close door} is a 2-state gadget with an \emph{opening} button or tunnel, a \emph{traverse} tunnel, and a \emph{closing} tunnel.
An open--close door is \bemph{open-tunneled} if it has an opening tunnel and \bemph{open-buttoned} if it has an opening button.
The opening button/tunnel opens the traverse tunnel, and the closing tunnel closes the traverse tunnel.
\figref{fig:different-doors} shows three examples of open--close doors.

Similar to self-closing doors,
each of the (two or three) tunnels in an open--close door can be either directed or undirected.
We call an open--close door \bemph{directed} if all of its tunnels are directed, \bemph{undirected} if all of its tunnels are undirected, and \bemph{mixed} if it has both directed and undirected tunnels.

Throughout this paper, the opening button/tunnel will be colored green, the traverse tunnel will be colored blue, and the closing tunnel will be colored red.
A dotted line indicates a closed traverse tunnel and a solid line indicates an open traverse tunnel.

Without planarity, we can wire together the close and traverse tunnels of an open--close door to simulate some self-closing door.
By \thmref{thm:scd-universal}, every open--close door is universal,
and by Corollary~\ref{thm:planar-scd-pspace}, reachability with every open--close door is \PSPACE-complete.
Nonplanar \PSPACE-hardness also follows from \cite[Metatheorem~3]{lemmings}.

Thus the goal of this section is to show that any open--close door is \emph{planarly} universal.
We will do so by showing that each open--close door simulates some self-closing door, some symmetric self-closing door, or a crossover.
First, in Sections~\ref{sec:planar-undirected} and~\ref{sec:planar-mixed}, we show planar universality for undirected and mixed open--close doors, respectively.
Next, in Section~\ref{sec:planar-crossing}, we show planar universality for all directed open--close doors with at least one pair of crossing tunnels.
Finally, in Section~\ref{sec:planar-noncrossing}, we show planar universality for directed open--close doors without internal crossings, where we have twelve different cases shown in \figref{fig:dir-door-cases}.

\subsection{Undirected Open--Close Doors}
\label{sec:planar-undirected}

For undirected open--close doors, the simple simulation of a self-closing door described above can be made planar:

\begin{lemma}\label{lem:planar-undirected}An undirected open--close door can planarly simulate a self-closing door.
\end{lemma}
\begin{proof}
  If the door is open-tunneled, we block off one of the ends of the opening tunnel. The agent can still enter from the other end of the opening tunnel, go through the opening tunnel twice, and leave. This effectively turns an undirected opening tunnel into an opening button. So without loss of generality, our gadget is open-buttoned.

  There are only four distinct ways that the 5 locations can be ordered, as shown in \figref{fig:undir-door-cases}.
  (If closing and traverse tunnels cross, then the opening button can be in any of the four regions, which are all symmetric to each other; if they do not cross, then only 2 regions are symmetric, so we get 3 cases.)
  In each case, we can wire the traverse and closing tunnels in series without blocking the opening button, as shown in \figref{fig:undir-door-cases}.
  This simulates an open-buttoned self-closing door:
  the player can open the gadget by going to the opening button;
  if the gadget is open, then the player can go through the traverse tunnel and then the closing tunnel,
  but cannot go the other way; and
  if the gadget is closed, then the player cannot go either way through the traverse tunnel or the closing tunnel.
\end{proof}

\begin{figure}
  \centering
  \includegraphics[scale=1]{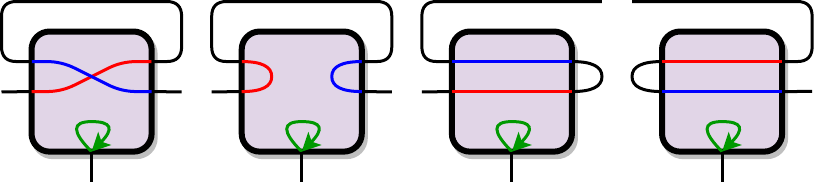}
  \caption{Any undirected open-buttoned open--close door simulates an open-buttoned self-closing door.}
  \label{fig:undir-door-cases}
\end{figure}

\subsection{Mixed Open--Close Doors}
\label{sec:planar-mixed}

Next we reduce the mixed case to the directed case, which we will analyze in the next two subsections.

\begin{lemma}\label{lem:planar-mixed}Any mixed open--close door can planarly simulate some directed open--close door.
\end{lemma}
\begin{proof}
  Consider an arbitrary mixed open--close door $M$.
  Because $M$ is mixed, it has a directed tunnel.
  No tunnel changes its own traversability when crossed, so this tunnel simulates a diode. We wire each
  undirected tunnel of $M$ through diodes at each end pointing in the same direction. This simulates a directed open--close door.
\end{proof}

\subsection{Directed Open--Close Doors with Internal Crossings}
\label{sec:planar-crossing}

It remains to analyze directed open--close doors.
First we consider a case where we can build a crossover.

An \defn{internal crossing} of a planar gadget
is a pair of tunnels $(a,b),(c,d)$ that cross each other
when drawn interior to the gadget,
i.e., where locations $a,c,b,d$ appear in this cyclic order.
For example, in \figref{fig:undir-door-cases}, Case 1 has an internal crossing.

\begin{lemma}\label{lem:planar-crossing} Any directed open--close door with an internal crossing can planarly simulate a directed crossover.
\end{lemma}
\begin{proof} If the opening tunnel crosses the closing tunnel, then we have a crossover because these tunnels are always open.
If the opening tunnel crosses the traverse tunnel, then we start the door open and have a crossover because neither
tunnel closes itself or the other.
Otherwise, the traverse tunnel crosses the closing tunnel and the opening button/tunnel can simulate an opening button.
Then we have four cases, as shown in
\figref{fig:traverse-closing-cases}.

\begin{figure}
  \centering
  \includegraphics[scale=.8]{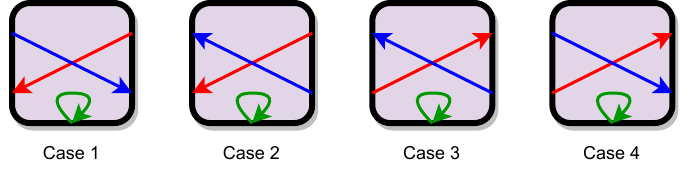}
  \caption{The four cases where the traverse tunnel crosses the closing tunnel but the opening button/tunnel does not cross either and can thus simulate a button.}
  \label{fig:traverse-closing-cases}
\end{figure}

In Cases 1, 2, and 3, we can simulate a crossover by connecting the opening button to either the input of the traverse tunnel or the output of the closing tunnel to ensure that the traverse tunnel is open when we need to use it,
as shown in \figref{fig:traverse-closing-cases-1234}.

\begin{figure}
  \centering
  \includegraphics[scale=.7]{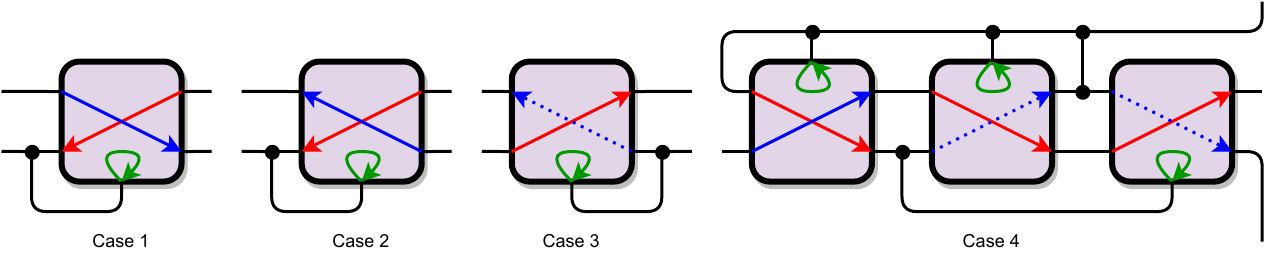}
  \caption{All four cases of the traverse tunnel crossing the closing tunnel can each simulate a crossover.}
  \label{fig:traverse-closing-cases-1234}
\end{figure}

Case 4 is trickier, however, because the opening button is adjacent to neither the input of the traverse tunnel nor the output of the closing tunnel.
In this case, we use three copies of the gadget, as shown on the right in \figref{fig:traverse-closing-cases-1234}.
The left-to-right path of this crossover involves going through all three doors, which is allowed as long as the left door stays open.
To take the top-to-bottom path,
the player opens the middle door, closes the left door, opens the right door, traverses the middle door, reopens
the left door (to keep the left-to-right path open), and traverses the right door. The player can leave partway through this
traversal, but this does nothing useful. So all doors with internal crossings can simulate crossovers.
\end{proof}

Once we have a crossover, we can use it to connect the traverse and closing tunnels in series to make a self-closing door, which we know is universal by \thmref{thm:scd-universal}.

\subsection{Directed Open--Close Doors without Internal Crossings}
\label{sec:planar-noncrossing}

Finally, this section solves the remaining (hardest) situation:

\begin{lemma}\label{lem:planar-noncrossing}
  Any directed open--close door without internal crossings can planarly simulate a self-closing door.
\end{lemma}

If the open--close door is open-tunneled and the ends of the opening tunnel are adjacent,
then we can connect them together to simulate an opening button, reducing the number of cases to consider.
This leaves twelve cases, shown in \figref{fig:dir-door-cases}.
We name these cases based on the cyclic order of locations, with
entrances having uppercase letters and exits having lowercase letters.

\begin{figure}
  \centering
  \includegraphics[scale=1]{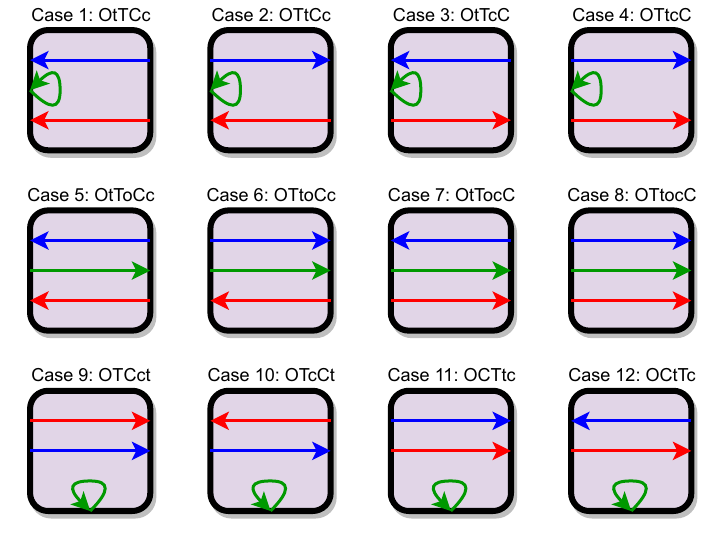}
  \caption{The twelve cases of a planar directed open--close door without internal crossings. Opening tunnels with adjacent locations are merged into an opening button.}
  \label{fig:dir-door-cases}
\end{figure}

We consider each case one by one in the following subsections, ordered roughly by the complexity of the proof, from simplest to most complicated.
The cases will not be considered in numeric order as defined in \figref{fig:dir-door-cases}.
We show that each case simulates a self-closing door or one of the cases considered earlier.

All of these cases except Case 8: OTtocC were solved using a bottom-up computer search from \cite{bosboom-thesis}. The basic idea is to repeatedly connect two existing gadgets together to make a new gadget, and hopefully eventually reach a known-universal gadget (like a self-closing door or a directed tripwire--lock). We also had hand-crafted proofs for many of these cases, but they were matched or surpassed by the computer search. We try to present these computer-generated results in the most humanly sensible way whenever possible.

The bottom-up computer search failed to solve Case 8: OTtocC. The conference version of this paper~\cite{doors-fun2020} listed this case as open, but we have since solved it with a dedicated computer search algorithm described in Section~\ref{sec:door-8}.

\subsubsection{Case 2: OTtCc, Case 10: OTcCt, and Case 12: OCtTc}
\label{sec:door-2-10-12}

In all these open--close doors, the traverse tunnel output is adjacent to the
closing tunnel input. Thus, we can simulate a self-closing door by wiring the traverse tunnel output to the
closing tunnel input.

\subsubsection{Case 4: OTtcC and Case 9: OTCct}
\label{sec:door-4-9}

These open--close doors can simulate a directed open-buttoned self-closing door, as shown in \figref{fig:dir-door-case-49}. Note that the two figures differ only in the planar embedding, so the proof is exactly the same for both of them.

If the agent
enters from $O$, they can open door 1, and then they are forced to close door 2. Continuing this loop does nothing, so the agent
then returns to $O$. Now door 1 is open and door 2 is closed. If the agent then enters from $A$, then they are forced to traverse
door 1. They can then open door 2 and then they are forced to close door 1. Continuing the loop does nothing, so the agent has no
other option but to traverse door 2 to $B$. The agent could not have taken this path initially since door 1 was closed, and
they cannot take it again without visiting $O$ because they closed door 1.

\begin{figure}
  \centering
  \begin{subfigure}{0.49\textwidth}
    \centering
    \includegraphics[scale=.8]{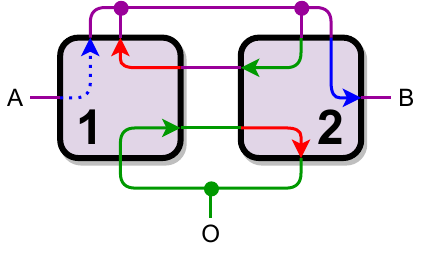}
    \caption{Case 4: OTtcC simulates a self-closing door.}
    \label{fig:dir-door-case-4}
  \end{subfigure}
  \begin{subfigure}{0.49\textwidth}
    \centering
    \includegraphics[scale=.8]{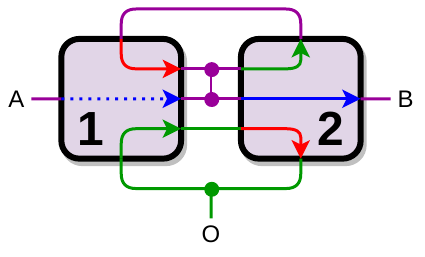}
    \caption{Case 9: OTCct simulates a self-closing door.}
    \label{fig:dir-door-case-9}
  \end{subfigure}
  \caption{Case 4: OTtcC and Case 9: OTCct simulate a self-closing door. The simulations start in the closed states. Locations and gadgets are labeled.}
  \label{fig:dir-door-case-49}
\end{figure}

\subsubsection{Case 6: OTtoCc Door}
\label{sec:door-6}

This open--close door can simulate a directed open-buttoned self-closing door, as shown in \figref{fig:dir-door-case-6}. If the agent
enters from $O$, they are forced to close door 3. If the agent then traverses door 2, they are forced to open door 3 and return
to $O$, accomplishing nothing. So the agent has no other option but to close door 1. If the agent tries to open door 2, they get stuck,
so they instead open door 1. Continuing the loop involving door 1 does nothing, so the agent then traverses door 2, opens door 3, and returns
to $O$. Now door 1 is open. If the agent enters from $A$, then they are forced to close door 2, traverse door 1, and close door 1.
Reopening door 1 puts the agent back into the situation of being forced to close door 1, so the agent instead opens door 2 and traverses door 3
to $B$. The agent could not have taken this path initially since door 1 was closed, and they cannot take it again
without visiting $O$ because they closed door 1.

\subsubsection{Case 7: OtTocC Door}
\label{sec:door-7}

This open--close door can simulate a directed open-buttoned self-closing door, as shown in \figref{fig:dir-door-case-7}.
If the agent enters from $O$, they must open door 3, then close door 2. If the agent then closes door 1, they get stuck because
door 2 is closed. The agent can traverse door 3 and leave via $O$, but they can also open and then traverse door 1 and then do the
same thing, which is advantageous. So the agent opens and traverses door 1, then traverses door 3 to $O$. Now door 3 is open,
door 2 is closed, and door 1 is open. If the agent enters from $A$, they must close door 3, then open door 2, then
traverse door 1. Opening door 1 and then traversing it is a no-op, and door 3 is closed, so the agent closes door 1 and then must
traverse door 2 to $B$. This leaves door 1 closed, door 2 open, and door 3 closed. The agent could not have taken this path initially
because door 1 was closed, and cannot take it again without visiting $O$ first for the same reason.

\begin{figure}
  \centering
  \begin{subfigure}{0.49\textwidth}
    \centering
    \includegraphics[scale=.8]{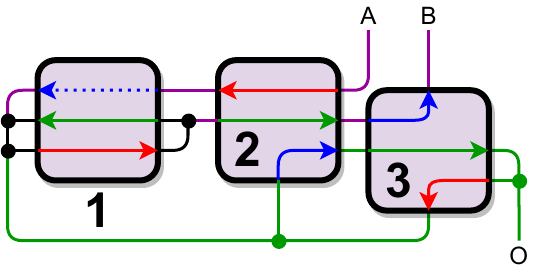}
    \caption{Case 6: OTtoCc simulates a self-closing door.}
    \label{fig:dir-door-case-6}
  \end{subfigure}
  \begin{subfigure}{0.49\textwidth}
    \centering
    \includegraphics[scale=.8]{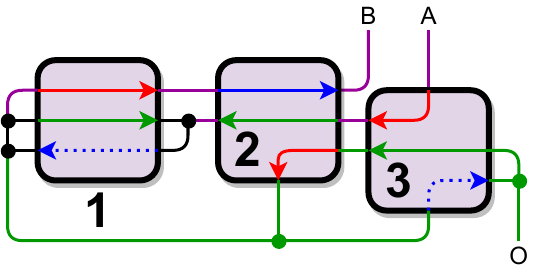}
    \caption{Case 7: OtTocC simulates a self-closing door.}
    \label{fig:dir-door-case-7}
  \end{subfigure}
  \caption{Case 6: OTtoCc and Case 7: OtTocC simulate a self-closing door. The simulations start in the closed states. Locations and gadgets are labeled.}
\end{figure}

\subsubsection{Case 3: OtTcC Door}
\label{sec:door-3}

This open--close door can simulate a directed open-buttoned self-closing door, as shown in \figref{fig:dir-door-case-3}. If the agent
enters from $O$ (the opening button), they can open doors 2 and 3. If they then leave, they have accomplished nothing because door 2 was
already open, and door 3 can be opened from $O$ anyway and cannot be traversed from $A$ or $B$ as we will see later.
So they close door 2 instead. Then they can open door 1 and they are forced to traverse door 3. The agent can then reopen
door 2 and return to $O$.
Now all the doors are open. If the agent then enters from $A$, then they are forced to traverse door 1 and close door 3. They can
then open door 1 (useless), and then they are forced to traverse door 2 and close door 1, leading to $T_1$.
The agent could not have taken this path initially
because door 1 was closed, and they cannot take it again without visiting $O$ because they just closed door 1.

\begin{figure}
  \centering
  \includegraphics[scale=.8]{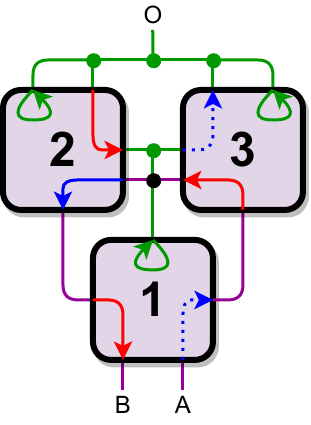}
  \caption{Case 3: OtTcC simulates a self-closing door. The simulation starts in the closed state. Locations and gadgets are labeled.}
  \label{fig:dir-door-case-3}
\end{figure}

\subsubsection{Case 1: OtTCc Door}
\label{sec:door-1}

In this case, we end up constructing a gadget we call the ``directed tripwire--lock''.
Recall that a \defn{tripwire--lock} \cite{gadgets} is a 2-state 2-tunnel gadget with an undirected tunnel that is traversable in exactly 1 state and an undirected tunnel that toggles the state of the gadget.
The \bemph{directed tripwire--lock} is similar except that its tunnels are directed. It turns out this gadget is also universal:

\begin{lemma}
  The OtTCc open--close door can simulate the directed tripwire--lock.
\end{lemma}
\begin{proof}
  This open--close door can simulate the parallel directed tripwire--lock, as shown in \figref{fig:dir-door-case-1}. The lock is simply the traverse tunnel on door 1. In the two simulated states we will either have doors 1 and 3 open or door 2 open. If door 2 is open, when traversing the tripwire tunnel we can go through the traverse tunnel allowing us to open doors 1 and 4 before closing door 2. With door 4 now open, we can go through its traverse tunnel opening door 3, and then closing door 4 on the way out. This leaves us with doors 1 and 3 open. Going through the tripwire tunnel again closes door 1 but allows us to go through the traverse tunnel of door 3 and open door 2. Doors 3 and 4 are then closed on the way out. There are states where we could fail to open all of these doors while traversing the tripwire tunnel, but this will leave the gadget with strictly less traversability and thus the agent will never want to take such a path.
\end{proof}
 
\begin{figure}
  \centering
  \includegraphics[scale=.8]{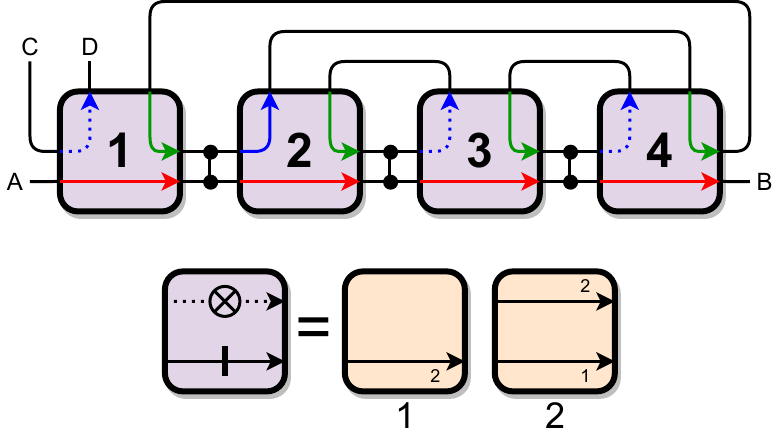}
  \caption{Top: Case 1: OtTCc simulates the parallel directed tripwire--lock.
    Bottom: the notation and the state diagram of the parallel directed tripwire--lock.}
  \label{fig:dir-door-case-1}
\end{figure}

\begin{lemma}\label{lem:dir-tripwire-lock-scd}
  The parallel directed tripwire--lock simulates a self-closing door.
\end{lemma}
\begin{proof}
The parallel directed tripwire--lock can simulate the antiparallel directed tripwire--lock,
as in \figref{fig:dir-door-case-1-antiparallel-DWL}. Crossing the tripwire in gadget 2 forces the agent to cross the tripwire
in gadget 1, so exactly 1 of the locks of gadgets 1 and 2 is locked. Similarly, exactly 1 of the locks of gadgets 3 and 4
is locked. Every traversal from $A$ to $B$ has to traverse the tripwire in gadget 5 exactly once, so the lock in gadget 5 is unlocked after an even number of $A \to B$ traversals and locked after an odd number of $A \to B$ traversals. Since the locks of gadgets 1 and 2 are anti-correlated, if the agent
wants to unlock the lock in gadget 1, it must afterward cross the lock in gadget 3. But said lock must be unlocked by
going in a loop through the lock in gadget 5, which is unlocked only after an even number of $A \to B$ traversals.
So during an even-indexed (second, fourth, etc.) traversal, the lock in gadget 1 must be locked before the agent
can leave. During an odd-indexed traversal, the agent can take the loop through the lock in gadget 5, unlock the lock in gadget 1,
take the loop again to unlock the exit, and exit. So the top path behaves like a directed tripwire for the lock in gadget 1,
which is the bottom path.

\begin{figure}
  \centering
  \includegraphics[scale=.8]{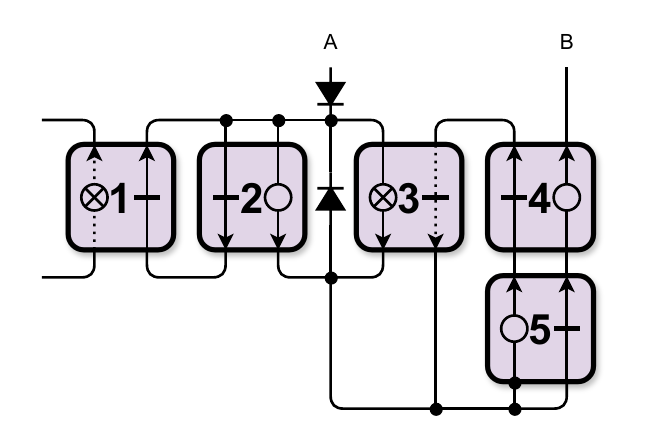}
  \caption{Parallel directed tripwire--lock simulates an antiparallel directed tripwire--lock.}
  \label{fig:dir-door-case-1-antiparallel-DWL}
\end{figure}

The antiparallel directed tripwire--lock simulates a directed tripwire--lock--tripwire (\figref{fig:dir-door-case-1-DWLW}, left). In this gadget, going through the top or bottom pairs of tripwires flips which of the middle locks is closed. If the top and bottom sets of locked tunnels are paired, then they block the middle pathway; however, if the sets are opposite, then the connection in the middle can be used to traverse them.

\begin{figure}
  \centering
  \includegraphics[scale=.8]{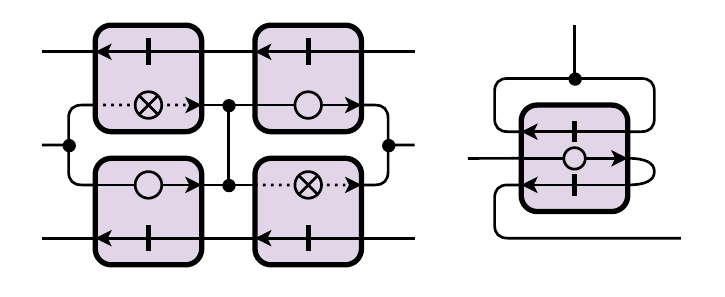}
  \caption{Left: antiparallel directed tripwire--lock simulates a directed tripwire--lock--tripwire; inspired by Figure 10 of \cite{gadgets}. Right: directed tripwire--lock--tripwire simulates a self-closing door.}
  \label{fig:dir-door-case-1-DWLW}
\end{figure}

This gadget in turn simulates a self-closing door (\figref{fig:dir-door-case-1-DWLW}, right). Traversing the horizontal tunnel closes the gadget, and it can be reopened by visiting the top location.
\end{proof}

\subsubsection{Case 5: OtToCc Door}
\label{sec:door-5}

This open--close door can simulate the Case 6: OTtoCc door, which has been covered in Section~\ref{sec:door-6},
by effectively flipping the traverse tunnel (\figref{fig:dir-door-case-5}). Door 1 is the gadget that we flip the traverse tunnel of.
If the agent enters from $A$, they must open door 2, then close door 2. If door 1 is open and the agent then traverses it, they are
back to a previous position with nothing changed. Instead, the agent opens door 3. If the agent then closes door 3, they get stuck because
door 2 is closed. So they must close door 2 (again) or traverse
door 3. These actions lead to the same situation. If the agent opens door 3 (again), they are back to the same situation that occurred after
opening door 3 the first time. If door 1 is open, the agent then traverses door 1. Then they must open door 2. Closing door 2 leads
to a previous situation, so the agent then traverses door~3. If the agent then traverses door 1 (again), they must open door 2 (again),
leading to a previous situation. So they instead open door~3. Closing door 2 and traversing door 3 lead to different previous situations,
so the agent then closes door 3, and then is forced to traverse door 2 to $B$, leaving all the doors unchanged. If door 1 is not open,
then the agent is unable to leave.

\begin{figure}
  \centering
  \includegraphics[scale=.8]{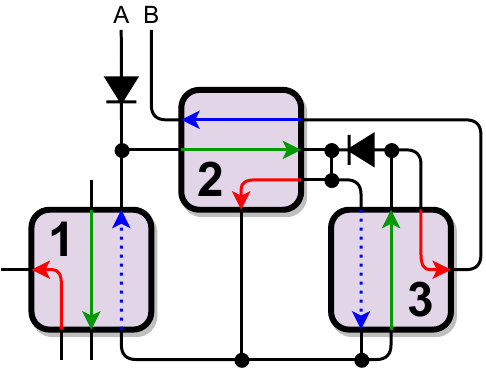}
  \caption{Case 5: OtToCc simulates Case 6: OTtoCc. The traverse tunnel of the leftmost gadget is effectively flipped.}
  \label{fig:dir-door-case-5}
\end{figure}

\subsubsection{Case 11: OCTtc Door}
\label{sec:door-11}

This open--close door can simulate the Case 12: OCtTc door, which has been covered in Section~\ref{sec:door-2-10-12},
by effectively flipping the traverse tunnel (\figref{fig:dir-door-case-11}). Door 1 is the gadget that we flip the traverse tunnel of.
If the agent enters from $A$, then they must traverse the left and middle diodes. The agent can then open doors 2 and 4 but must pick
one to close. If they close door 2, they get stuck. So the agent closes door 4. Going to open doors 2 and 4 again leads to a previous
situation, so the agent instead traverses door 2. Traversing door 1 (if it is open) leads to a previous situation,
and traversing door 5 leads to being stuck (since the agent has previously closed door~4).
The agent traverses the right diode, and can open doors 3 and 5 but must pick one to close. Closing door 3 leads to a previous
situation, so the agent then closes door 5, then must traverse door 3. Going to open doors 3 and 5 again leads to a previous situation. If
door 1 is open, then the agent traverses door 1. Traversing door 2 leads to a previous situation, so the agent then opens doors 2 and 4.
Closing door 4 leads to a previous situation, so the agent closes door 2, then must traverse door 3. Traversing door 1 leads to a
previous situation, so the agent instead opens doors 3 and 5. Closing door 5 leads to a previous situation, so the agent then
closes door 3. Traversing door 1 or going to open doors 3 and 5 both lead to previous situations, so the agent then traverses doors 5 and 3, leaving via $B$ and leaving all the doors unchanged. If door 1 is not open, however, then the agent
cannot leave because door 4 is closed and the only way to open it is through door 1's traverse tunnel.

\begin{figure}
  \centering
  \includegraphics[scale=.8]{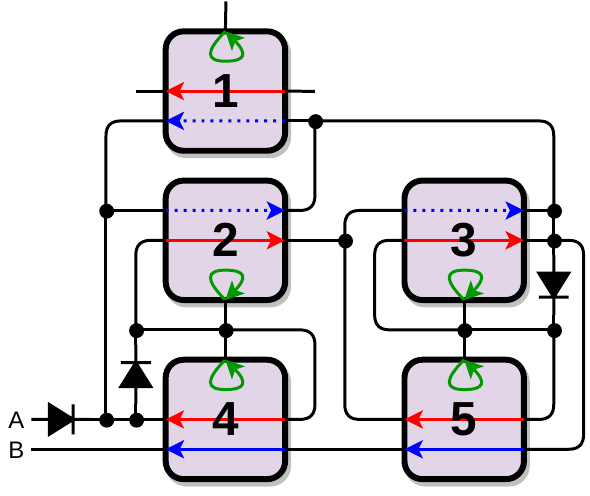}
  \caption{Case 11: OCTtc simulates Case 12: OCtTc. The traverse tunnel of the topmost gadget is effectively flipped.}
  \label{fig:dir-door-case-11}
\end{figure}

\subsubsection{Case 8: OTtocC Door}
\label{sec:door-8}

This open--close door can simulate the Case 7: OtTocC door, which has been covered in Section~\ref{sec:door-7}, by effectively flipping the traverse tunnel (\figref{fig:dir-door-case-8}). A straightforward but tedious depth-first search through the configuration space shows that the only way to go from $A$ to $B$ is to go straight at every intersection; turning at any intersection gets you stuck. This traversal preserves the states of all gadgets, but is only possible if door 10 is open.

\begin{figure}
  \centering
  \includegraphics[scale=.8]{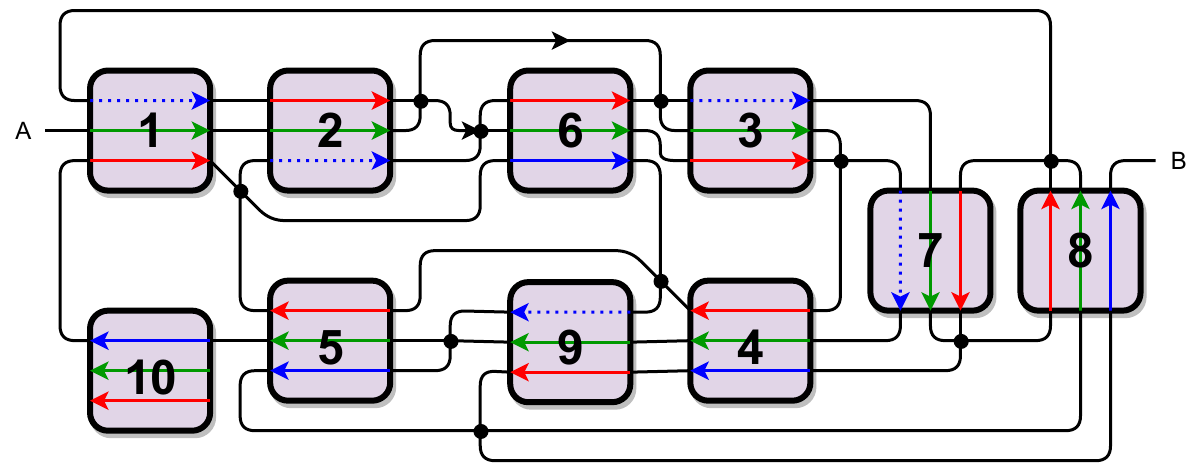}
  \caption[Case 8: OTtocC simulates Case 7: OtTocC.]
    {Case 8: OTtocC simulates Case 7: OtTocC.
    The traverse tunnel of gadget 10 is effectively flipped.
    The traversal sequence simulating a Case-7 door traverse (from $A$ to $B$) is the following: open~1, open~2, open~3, close~4, close~5, traverse~2, close~6, traverse~3, open~7, close~8, traverse~1, close~2, open~6, close~3, traverse~7, open~4, open~9, open~5, traverse~10, close~1, traverse~6, traverse~9, traverse~5, open~8, close~7, traverse~4, close~9, traverse~8.
    See an animation at \url{https://erikdemaine.org/gadgets/evildoor/}.}
  \label{fig:dir-door-case-8}
\end{figure}

Next we sketch the algorithm we used to find this proof.

We start with a directed closed loop in the plane, with some self-crossings. We want to add some doors so that the only infinite periodic walk through the gadgets is around this loop, and turning at any crossing eventually gets you stuck. We look for triples of edges $c, o, t$ such that
\begin{itemize}
\item the edges appear in cyclic order $cot$ in the loop (as opposed to $toc$);
\item $c$ and $o$ share a face $f_{co}$ and are directed in opposite directions around $f_{co}$; and
\item $o$ and $t$ share a different face $f_{ot}$ and are directed in opposite directions around $f_{ot}$.
\end{itemize}

For every such triple of edges, we add a door with $c$ as the closing tunnel, $o$ as the opening tunnel, and $t$ as the traverse tunnel. This is enough to make sure that the only periodic walk is the desired one. However, this creates a huge number of doors (around $200$), many of which overlap each other (in the sense of \figref{fig:scd-dir-crossover}). We choose a random door (with doors with lots of overlaps weighted higher), and try removing it. If this introduces any undesired periodic walks, we backtrack. With a complicated enough starting loop and with some luck, we end up with a set of doors that do not overlap.

After this, we cut the loop somewhere ($A$ and $B$ in the figure) and insert the final door (door $10$ in the figure). This is not guaranteed to work, because, for example, you might be able to go from $A$ to $B$ a finite number of times even when the gadget is supposed to be closed (we only know you cannot do it infinitely). But in practice it usually works.

We ran this algorithm many times for many different starting loops, and the smallest solution we found was \figref{fig:dir-door-case-8}.

\subsection{Universality}

\begin{theorem}\label{thm:planar-door-universal} Any open--close door is planarly universal, i.e., it can planarly simulate any gadget.
\end{theorem}
\begin{proof}
This follows from Lemmas~\ref{lem:planar-undirected}, \ref{lem:planar-mixed}, \ref{lem:planar-crossing}, and~\ref{lem:planar-noncrossing},
as those cover all the cases.
\end{proof}

\begin{corollary}\label{thm:planar-door-pspace}
  Planar reachability with any open--close door is \PSPACE-complete.
\end{corollary}

\begin{proof}
  Analogous to Corollary~\ref{thm:planar-scd-pspace}.
\end{proof}

\section{Applications}
\label{sec:applications}
In this section, we use our results about the complexity of
planar reachability with door gadgets to prove \PSPACE-hardness
of several video games:

\begin{enumerate}
\item Lemmings, Donkey Kong Country 1, 2, and 3,
  The Legend of Zelda: A Link to the Past, and Super Mario Bros.\
  were previously proved \PSPACE-hard, but the proofs simplify
  because they no longer require crossover gadgets
  (Section~\ref{sec:simplifications}).
\item \emph{Sokobond} is a 2D block pushing game
  where the blocks are able to fuse into polyominoes,
  which we newly prove \PSPACE-complete (Section~\ref{sec:sokobond}).
\item Eight new Mario games are newly \PSPACE-complete:
  \begin{enumerate}
  \item Super Mario 64 and Super Mario 64 DS (Section~\ref{sec:sm64-sm64ds}).
  \item Super Mario Sunshine (Section~\ref{sec:super-mario-sunshine}).
  \item Super Mario Galaxy (Section~\ref{sec:super-mario-galaxy}).
  \item Super Mario Galaxy 2 (Section~\ref{sec:super-mario-galaxy-2}).
  \item Super Mario 3D Land and Super Mario 3D World (Section~\ref{sec:super-mario-3d-land-world}).
  \item Super Mario Odyssey (Section~\ref{sec:super-mario-odyssey}).
  \item Captain Toad:\ Treasure Tracker,
    and the associated Captain Toad levels in Super Mario 3D World
    (Section~\ref{sec:captain-toad}).
  \end{enumerate}
  All eight are 3D platformers in which the player controls the agent (either Mario or Toad) to collect resources or reach target locations while avoiding or defeating enemies and environmental hazards.
  The player's actions typically include making the agent jump and walk in an approximately continuous environment, but Toad is unable to jump.
  The agent also has health and can take damage, which can cause the player to lose the game.
  Each game's section gives the necessary details on any additional mechanics.
\end{enumerate}

\subsection{Simplifications}
\label{sec:simplifications}

Our planar open--close door results in particular simplify prior uses of a door framework.
The Lemmings door \cite[Figure~4]{lemmings} has an internal crossing,
so \lemref{lem:planar-crossing} applies.
The Donkey Kong Country 1, 2, and 3 doors \cite[Figures~21--23]{nintendoor}
are the Case 10: OTcCt door, Case 4: OTtcC door, and an internal crossing door, respectively,
so Lemmas~\ref{lem:planar-noncrossing} and~\ref{lem:planar-crossing} apply.
The Legend of Zelda: A Link to the Past door \cite[Figure~30]{nintendoor}
has an internal crossing, so \lemref{lem:planar-crossing} applies.
The Super Mario Bros.\ door \cite[Figure~6]{demaine2016super}
is the Case 4: OTtcC door, so \lemref{lem:planar-noncrossing} applies.
Therefore all of the crossover gadgets in these reductions
\cite[Figure~2(e)]{lemmings},
\cite[Figure~20]{nintendoor},
\cite[Figure~28]{nintendoor},
\cite[Figure~5]{demaine2016super}
are not in fact needed to prove \PSPACE-hardness of these games.

\subsection{Sokobond}
\label{sec:sokobond}
Sokobond \cite{Sokobond} is a 2D block pushing game where the blocks are atoms/molecules. Movement is discrete along a square grid. The player starts as a single atom.
Each atom except He has some number of free
electrons (H has 1, O has 2, N has 3, C has 4). When two atoms that both have free electrons are adjacent, they both lose
a free electron and bond into a molecule. Molecules are rigid, so pushing an atom in a molecule results in the entire molecule
moving. Atoms/molecules can also push each other.

Sokobond with He atoms is trivially NP-hard as it includes \textsc{Push}\text{-}$*$ \cite{PushingBlocks_CGTA}.
We show \PSPACE-hardness even without He atoms:

\begin{theorem} Completing a level in Sokobond with H and O atoms is \PSPACE-hard.
\end{theorem}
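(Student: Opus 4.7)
The plan is to reduce from planar motion planning with some self-closing door, which is \PSPACE-hard by Theorem~\ref{thm:planar-scd}. It therefore suffices to build a single planar normal or symmetric self-closing door gadget in Sokobond using only H and O atoms, together with a way to wire such gadgets together in a planar fashion so that the player-atom can route between them without unintended effects. Because Theorem~\ref{thm:planar-scd} handles essentially every local embedding, I have flexibility in choosing a convenient variant (for example, the undirected open-required normal self-closing door), which should make the geometric construction easier.

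First, I would fix the player-atom. A natural choice is to let the player be a single H atom, and design all wires as narrow corridors (walled on both sides) so that as the H traverses a wire it encounters no neighboring atoms and therefore forms no bonds. Second, I would construct the door gadget itself as a small chamber holding a movable O-containing molecule whose position encodes the gadget's state: in the open state a push-traversal of the self-close tunnel is possible, but completing that traversal inevitably shoves the molecule into a configuration that blocks future traversals, yielding the closed state. The opening tunnel would let the player approach the obstructing molecule from another side and push it back to the open configuration, perhaps after forming an auxiliary H--O bond to produce a rigid polyomino of the correct shape for re-opening. For wiring, since Sokobond is a 2D grid game and the gadget ports are naturally on the boundary of the chamber, I can route corridors between gadgets planarly; a planar wire-crossing is not required because Theorem~\ref{thm:planar-scd} already supplies planar hardness.

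The main obstacle is that bonding in Sokobond is monotone --- bonds are never broken, and free-electron counts only decrease over the course of play --- so the gadget's ``state'' cannot be encoded by a reversible bond pattern but must be encoded by the \emph{position} of a rigid (already-bonded) molecule that gets shuffled by the player's pushes. The delicate part is therefore the case analysis: I must enumerate every sequence of moves the player might attempt inside the gadget (entering from each port, pushing the molecule in every legal direction, possibly wandering in and out before completing a traversal) and verify that the only behaviors that expand the player's reachable set are exactly the allowed open and self-close traversals. In particular I must rule out ``stuck'' outcomes that trap the player or the molecule in a configuration not corresponding to either gadget state, and must ensure no unintended H--O bond along a wire can permanently glue the player's molecule to the chamber walls in a way that simulates an unintended transition. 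Once the gadget is verified and the wiring shown to be planar, Theorem~\ref{thm:planar-scd} immediately yields \PSPACE-hardness of level completion.
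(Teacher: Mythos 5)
Your high-level framing is reasonable and close in spirit to the paper's: the paper also encodes the gadget state in the \emph{position} of a rigid, pre-bonded molecule that the player (a single H atom) pushes around, and also relies on an existing planar hardness theorem so that no crossover is needed. (The paper happens to reduce from a full door via Theorem~\ref{thm:planar} rather than from a self-closing door via Theorem~\ref{thm:planar-scd}; either starting point is legitimate.) But there is a genuine gap: for a game-specific hardness theorem like this one, essentially the \emph{entire} content of the proof is the concrete gadget and the verification that it behaves as claimed, and your proposal supplies neither. You describe what a gadget ``would'' look like and correctly list the obligations (enumerating player behaviors, ruling out stuck configurations), but you do not discharge any of them, so nothing is actually proved.

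Moreover, the one concrete mechanism you do float --- ``forming an auxiliary H--O bond to produce a rigid polyomino of the correct shape for re-opening'' --- points in a dangerous direction. Bonds in Sokobond are permanent and each one consumes a free electron from both atoms; if the player's H ever bonds to an O inside a gadget, the player's only free electron is gone, it is now permanently attached to that O (which still has a leftover free electron), and the level-completion condition (which the paper arranges to be ``the player's H bonds with a target H'') can no longer be met. The paper avoids this entire class of problems by making the state-encoding molecule fully saturated, with \emph{no} free electrons, so that no bonding can ever occur inside a gadget and the only win condition is reaching the target H. Any correct completion of your plan would need the same precaution, and would still need to exhibit the actual chamber geometry and do the move-by-move case analysis you only gesture at.
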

\begin{proof}
We reduce from 1-player planar motion planning with open--close doors and use \thmref{thm:planar-door-pspace}.
 
Let the player start as an H atom trying to reach another H atom.
We can simulate an open--close door as shown in \figref{fig:sokobond-door}. To open the door,
the player pushes down on the big molecule. The player can go through the traverse tunnel if and only if the molecule is down.
When going through the closing tunnel, the player is forced to push up on the molecule, closing the traverse tunnel. The
molecule used to simulate an open--close door has no free electrons, so the level can be completed if and only if the player can reach the other H atom.
\end{proof}


\subsection{Super Mario 64/Super Mario 64 DS}
\label{sec:sm64-sm64ds}

\xxx{Would be awesome but not necessary to have screenshots of the relevant items/mechanics in the Mario games.}

Super Mario 64 is a 3D Mario game for the Nintendo 64
where Mario collects Stars from courses inside paintings to save the princess,
who is trapped behind a painting.
Super Mario 64 DS is a remake of Super Mario 64 for the Nintendo DS
(still in 3D), featuring the same courses as in Super Mario 64 plus new courses,
as well as the ability to play as characters other than Mario.
In this reduction, we will primarily make use of quicksand, which will defeat Mario if he lands in it, and the ghost enemy Boo.

The Boo is an enemy that (with normal parameters) chases Mario if he is looking away from it and is less than a certain distance away.
Once Mario gets too far, the Boo moves back to its original position. 
Unlike most enemies, jumping on a Boo does not kill it, but instead sends it a short distance forward or backward, which we will use to help Mario cross the quicksand. Some walls stop the Boo, but it can go through certain walls that normal Mario cannot go through; we call these Boo-only walls. The Boo is also unable to go through doors. We also make use of one-way walls which Mario and the Boo can go through in one direction but not the other.

For the setup, we use one Boo in Super Mario 64 DS and two Boos in Super Mario 64. Performing a kick while in the air sends Mario a short distance up and can normally only be performed once per jump.
But Mario can kick after jumping on a Boo
in Super Mario 64 DS even if he already kicked, allowing him to jump on the same Boo. This is not true in Super Mario 64, so jumping on a second Boo
is necessary to stall long enough to jump on the first Boo again.

\begin{theorem} Collecting a Star in a Super Mario 64/Super Mario 64 DS course is \PSPACE-hard assuming no course size limits.
\end{theorem}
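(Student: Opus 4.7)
The plan is to reduce from 1-player planar motion planning with a symmetric self-closing door, which is \PSPACE-hard by Theorem~\ref{thm:planar-scd}. I will construct a symmetric self-closing door gadget in each game using Boos, quicksand, doors, and one-way walls, and then show that these gadgets can be wired together via corridors in the course.

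The core idea is to let a Boo's position encode the state of the gadget. Each gadget consists of a narrow chamber bisected by a stretch of quicksand, with a Boo confined to the chamber by Boo-only walls and Boo-blocking doors. The two traversal tunnels enter on opposite sides of the quicksand. To cross, Mario must jump on top of the Boo (which launches it a short distance), since stepping into the sand would defeat him. After a traversal, the Boo has been displaced into the opposite half of the chamber, so the other traversal tunnel is now the one on which the Boo is reachable for a jump, while the tunnel just used no longer gives Mario access to a Boo. This is precisely the behavior of a symmetric self-closing door: each traversal closes the tunnel just used and opens the other.

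To keep Mario from walking backwards through a traversal tunnel after crossing, or from re-entering a tunnel whose Boo has moved away, I would place one-way walls at the tunnel ends so that each traversal is directed, and use Boo-stopping doors to prevent the Boo from drifting back to its original region on its own. The Boo's aggression and return radius must be tuned so that after Mario exits, the Boo comes to rest in the new half rather than homing back to its starting spot. For Super Mario 64 DS, a single Boo suffices because Mario can reset his mid-air kick by jumping on a Boo, giving him enough additional height to clear the quicksand in one crossing. For Super Mario 64, which lacks this kick reset, I would place a second Boo adjacent to the first so that Mario can alternate jumps between them, gaining the necessary airtime, and confine both Boos together so the second does not corrupt the state encoding.

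The main obstacle will be verifying that Mario cannot exploit the gadget in unintended ways: he must not be able to cross the quicksand without displacing the Boo, to reach the Boo from the closed-side tunnel, or to leave the Boo in some intermediate position by entering and retreating partway. Ruling these out requires a careful choice of chamber dimensions, Boo aggression distance, and placement of one-way and Boo-only walls, together with an argument that any partial traversal either returns the gadget to its original state or leaves it strictly less traversable (so the agent has no incentive to perform it). Once the gadget is verified, connecting gadgets is easy: since Theorem~\ref{thm:planar-scd} already gives planar \PSPACE-hardness, it suffices to route corridors between gadgets without crossings, which is straightforward in a 3D course (and no use of the third dimension is even required).
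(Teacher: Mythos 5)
Your proposal is correct and takes essentially the same approach as the paper: a Boo-position-encodes-state gadget where Mario must bounce off the Boo(s) to cross quicksand, displacing the Boo to the other side, with one-way walls, Boo-only walls, and doors confining the Boo, and one Boo in the DS version versus two in the original due to the mid-air kick reset. The only cosmetic difference is that you reduce from the planar self-closing-door result and route corridors without crossings, whereas the paper reduces from the non-planar version and lets the 3D environment absorb any crossings; both are valid.
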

\begin{proof}
We reduce from 1-player motion planning with the symmetric self-closing door (\thmref{thm:planar-scd-pspace}), where the target to reach is
a Star. \figref{fig:sm64ds-door} shows the simulation.

In the setup below, Mario goes from $1$ to $2$ and opens the $3 \to 4$ traversal
by going through the door on the bottom-left and hopping on the Boo(s) to the top-left.
Then Mario lets the Boo(s) chase him a little to turn the Boo(s), and hops on the Boo(s) to push them into the top-right.
Finally, Mario goes through the top-left door. Mario cannot just jump to the other side because the distance is too far.
He also cannot go into the traverse path because of the Boo-only wall. The Boo(s) will try to go back to their home position(s),
but cannot because they are stuck behind a one-way wall and a regular wall. If Mario does not move the Boo(s) to the top-right, they still
cannot get back to their home position(s) because of a different one-way wall, so Mario cannot leave the $1 \to 2$ traversal open.

Mario goes from $3$ to $4$ by going through the top-right door and hopping on the Boo(s) to the bottom-right,
then going through the bottom-right door. The Boo(s) will go back to their original position(s) at the bottom left on their own.

Mario cannot lure the Boo(s) away from the gadget because it is completely walled in except for the doors, which the Boo(s) cannot go through.
\end{proof}

\begin{figure}
  \centering
  \includegraphics[scale=1]{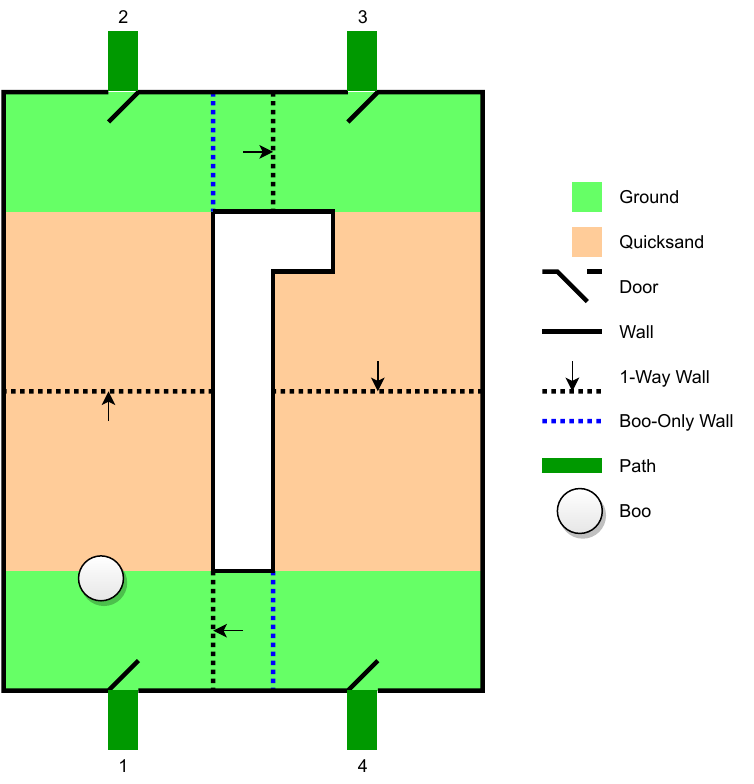}
  \caption{Simulation of a symmetric self-closing door in Super Mario 64 DS. In Super Mario 64, there are 2 Boos instead of 1.
  The ground and quicksand are on the same vertical level.
  The room is covered by a ceiling. The hallways are too wide to wall jump across.}
  \label{fig:sm64ds-door}
\end{figure}

\subsection{Super Mario Sunshine}
\label{sec:super-mario-sunshine}
Super Mario Sunshine is a 3D Mario game for the GameCube
where Mario is falsely accused of spreading graffiti and is forced to clean it up before he can leave.
Like Super Mario 64, this game includes one-way walls.
This game features a new device, F.L.U.D.D., attached to Mario's back that allows him to spray water.
Lily Pads float on water; the player can ride a Lily Pad and cause it to move by spraying water in the opposite direction.
Sludge is an environmental hazard which kills Mario if he touches it.
The general goal of a level is to collect Shrine Sprites.

\begin{theorem} Collecting a Shine Sprite in a Super Mario Sunshine level is \PSPACE-hard assuming no level size limits.
\end{theorem}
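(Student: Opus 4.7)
The plan is to reduce from 1-player motion planning with a symmetric self-closing door, which is \PSPACE-hard by Corollary~\ref{thm:AllSelfClosingPspace}, placing a Shine Sprite at the target location so that it is collectible exactly when the door instance is solvable. The only nontrivial step is to construct a symmetric self-closing door from Sunshine's mechanics; the wires between gadgets are just flat sludge-free walkways and need no separate argument.

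The door gadget I would build is organized around a single Lily Pad confined to a narrow sludge-filled channel with two stable endpoints $A$ and $B$, whose position encodes the state of the door. Two of the four external ports will be connected by a path that forces Mario to stand on the Lily Pad at $A$ and spray F.L.U.D.D.\ backwards to ride it to $B$; the other two ports are connected by a path that rides the pad from $B$ back to $A$. Each traversal therefore leaves the Lily Pad at the opposite endpoint, which blocks that same traversal from being used again and enables the other --- exactly the symmetric self-closing door behavior. I will make the channel wide enough and the surrounding sludge expansive enough that Mario cannot jump, dive, or wall-kick across, so riding the Lily Pad is the only way to traverse.

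To pin down the entry and exit geometry, I will use one-way walls of the same kind already employed in the Super Mario 64 construction of \figref{fig:sm64ds-door}. These walls route Mario into a port corresponding to whichever side the Lily Pad is currently on and out the opposite port, and they prevent him from re-entering the channel from the ``wrong'' side without first completing a full traversal of an external wire. I will also arrange the ports and one-way walls so that stepping onto the Lily Pad fully commits Mario to the current traversal: once he has pushed off from $A$, he cannot get back to his entry port without first arriving at the corresponding exit port on the $B$ side.

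The hard part of the argument will be ruling out ``partial'' uses of the Lily Pad. Mario might dismount midstream, spray water from a different angle to nudge the pad somewhere unintended, or try to leave it marooned in the middle of the channel. The plan is to choose geometric parameters --- channel length, ledge heights, and the placement of one-way walls and surrounding sludge --- so that any partial action either strands Mario (since without the pad under him he cannot cross the sludge) or simply returns the gadget to a previous state without helping him reach any new external port. Once this case analysis is complete, the gadget faithfully simulates the symmetric self-closing door, and Corollary~\ref{thm:AllSelfClosingPspace} yields \PSPACE-hardness of collecting the Shine Sprite.
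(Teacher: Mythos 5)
Your overall strategy matches the paper's: reduce from 1-player motion planning with the symmetric self-closing door (Corollary~\ref{thm:AllSelfClosingPspace}), and realize the gadget with a single Lily Pad over sludge whose position encodes the door's state, with one-way walls and wide sludge forcing Mario to ride the pad. However, your specific single-channel geometry has a gap that the paper's gadget (\figref{fig:sms-door}) is designed to avoid. In your layout, endpoint $A$ simultaneously serves as the boarding dock for the $1\to 2$ traversal and the disembarking dock for the $3\to 4$ traversal (and symmetrically at $B$). So when Mario enters at port 1 and steps onto the pad at $A$, nothing in your description prevents him from stepping right back off into port 4's exit area, yielding an illegal $1\to 4$ transition; likewise $3\to 2$ at the $B$ end. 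One-way walls alone cannot fix this, because both the legitimate boarder and the legitimate disembarker must have physical access to the pad at the same endpoint. The paper's solution is the missing ingredient: the two traversal channels are kept physically separate, joined only by a \emph{slit} that is wide enough for the Lily Pad to be pushed partway through but too thin for Mario to pass. After riding across, Mario shoves the pad partially through the slit so it becomes accessible from the other channel; the state is ``which channel can currently reach the pad,'' and Mario can never leak between the two tunnels.

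Two smaller points. First, the paper notes that a thin layer of water must sit on top of the sludge: this keeps the Lily Pad from disintegrating while still preventing Mario from swimming or walking across. Your ``sludge-filled channel'' with the pad sitting directly on sludge would not survive as described, so this mechanic-level detail is load-bearing, not cosmetic. Second, your plan correctly identifies that partial moves (dismounting midstream, marooning the pad) must be argued harmless, and the paper handles this implicitly by making the sludge too long to jump and the slits too thin to leak through; that part of your case analysis is in the right spirit and would go through once the slit-based separation is in place.
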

\begin{proof}
We reduce from 1-player motion planning with the symmetric self-closing door (\thmref{thm:planar-scd-pspace}), where the target to collect is a Shine Sprite.
\figref{fig:sms-door} shows the simulation of a symmetric self-closing door.


The thin water above the sludge prevents the Lily Pad from disintegrating, while preventing Mario from crossing without using the Lily Pad.
Mario goes from $1$ to $2$ and opens
the $3 \to 4$ traversal by crossing
the one-way wall and riding the Lily Pad across, then moves the Lily Pad partially across the slit so it can be accessed from the other side.
He cannot leak to the section between $3$ and $4$ because the slits are too thin. The sludge is too long to simply jump to the other
side, so the Lily Pad is needed. Mario cannot do anything from $2$ because the one-way wall blocks him from going to $1$.
Mario goes from $3$ to $4$ in a similar manner.
\end{proof}

\begin{figure}
  \centering
  \includegraphics[scale=1]{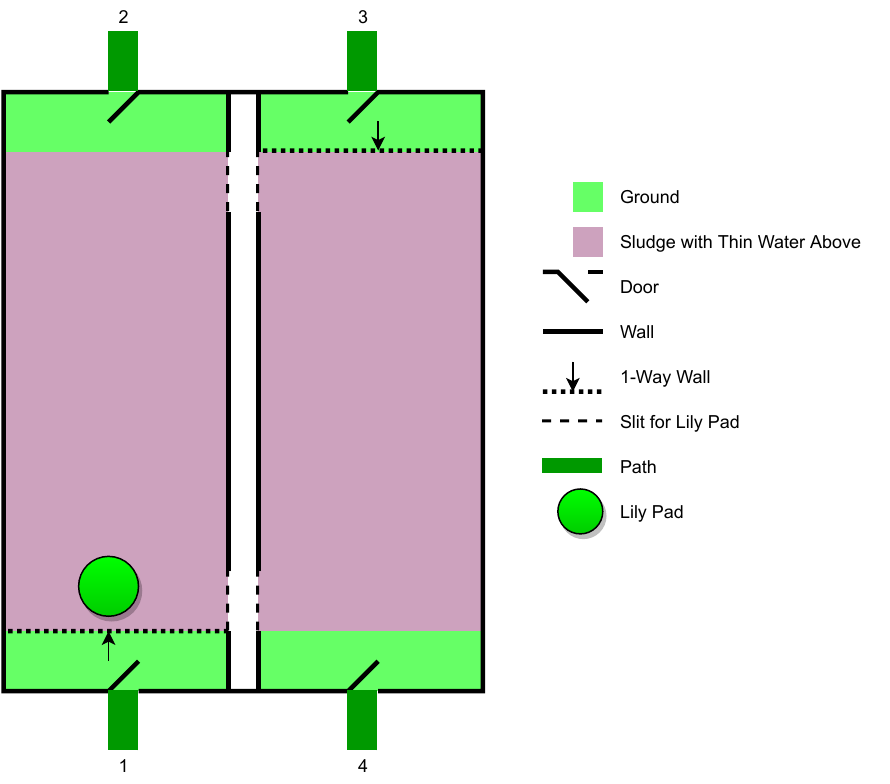}
  \caption{Simulation of a symmetric self-closing door in Super Mario Sunshine. The slits allow the Lily Pad to cross without allowing
  bulky Mario to do so. The hallways are too wide to wall jump across.}
  \label{fig:sms-door}
\end{figure}

\subsection{Super Mario Galaxy}
\label{sec:super-mario-galaxy}
Super Mario Galaxy is a 3D Mario game for the Wii
where Mario goes to space. He encounters alien creatures along the way and collects Power Stars to
restore the power of a spaceship. The game features downward gravity, upward gravity, sideways gravity, spherical gravity, cubical gravity,
tubular gravity, cylindrical gravity that allows infinite freefall, W-shaped gravity, gravity that cannot make up its mind, and most importantly,
controllable gravity.

Dark matter disintegrates Mario when he touches it, resulting in death.
The Gravity Switch changes the direction of gravity
when spun and can be spun multiple times. 

\begin{theorem} Collecting a Power Star in a Super Mario Galaxy galaxy is \PSPACE-hard assuming no galaxy size limits.
\end{theorem}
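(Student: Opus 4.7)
The plan is to reduce from 1-player motion planning with the symmetric self-closing door, which is \PSPACE-hard by \thmref{thm:AllSelfClosingPspace}, following the template of the Super Mario 64 and Super Mario Sunshine reductions. The designated target location will be a Power Star. The task is thus to build a planar family of symmetric self-closing door gadgets inside a Super Mario Galaxy galaxy, along with a way to wire ports together in a planar fashion.

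At a high level, the gadget stores its state in a single Gravity Switch and uses dark matter as an inviolable wall. Each of the two symmetric tunnels is realized as a narrow corridor crossing a gap of dark matter too wide to clear by any jump. One corridor is walkable only while gravity points in the ``first'' direction and the other only while gravity points in the ``second'' direction; in each case this is enforced by putting the floor of the corridor on a surface that only becomes a floor in the corresponding gravitational orientation (and becomes a ceiling or wall, unreachable across dark matter, in the other orientation). To traverse a tunnel, Mario is funnelled past the Gravity Switch and must spin it to proceed, which flips the gadget's gravity: this simultaneously closes the tunnel he is on and opens the opposite tunnel, exactly the behaviour of a symmetric self-closing door. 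The entire construction is enclosed by dark matter so that Mario can only enter and leave through the four designated ports, and the planar wiring between gadgets is done via flat corridors in the unbounded galaxy, as in the earlier 3D Mario reductions.

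The main obstacle will be designing the geometry carefully enough to defeat Mario's rich movement repertoire, which includes long jumps, triple jumps, wall jumps, backflips, and spin jumps. Corridors must be wide enough to rule out wall jumping from one side to the other (as in \figref{fig:sm64ds-door} and \figref{fig:sms-door}), dark matter gaps must be longer than any achievable jump distance under either gravity setting, and the Gravity Switch must sit in a pocket where Mario is committed to crossing once he has spun it; otherwise he could spin, retreat to his entry port, and leave both tunnels open. A secondary subtlety is that a Gravity Switch can be spun multiple times, so we must arrange the geometry so that after the first spin the only path back to the entry side is the corridor requiring the new gravity, preventing Mario from flipping gravity twice to re-open his original tunnel. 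Once the gadget is built and shown to behave as a symmetric self-closing door against any sequence of Mario's actions, \PSPACE-hardness follows immediately from \thmref{thm:AllSelfClosingPspace}.
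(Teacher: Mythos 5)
Your proposal matches the paper's construction: both reduce from 1-player motion planning with the symmetric self-closing door via \thmref{thm:AllSelfClosingPspace}, encode the gadget's state in a single Gravity Switch toggling between two gravity directions, use dark matter pits that become lethal under the wrong gravity to close one tunnel while opening the other, and force the switch to be hit mid-traversal (the paper uses a 1-way wall plus a forced pass over the switch to handle the re-entry and double-spin concerns you raise). This is essentially the same approach as the paper's proof.
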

\begin{proof}
We reduce from 1-player motion planning with the symmetric self-closing door (\thmref{thm:planar-scd-pspace}), where the target to collect is a Power Star.
\figref{fig:smg-door} shows the simulation of a symmetric self-closing door.

The Gravity Switch in this construction switches gravity between down and up. Mario goes from $1$ to $2$ by crossing the one-way wall and hitting the Gravity Switch on his way to the right.
This is forced because of a pit of dark matter, and closes
the $1 \to 2$ traversal because when gravity points up, attempting the traversal would land Mario on dark matter. At the same time,
it opens the $3 \to 4$ traversal. Mario cannot enter $2$ and do anything useful because flipping the Gravity Switch means
falling in the pit of dark matter. Mario goes from $3$ to $4$ in a similar manner.
\end{proof}

\begin{figure}
  \centering
  \includegraphics[width=\linewidth]{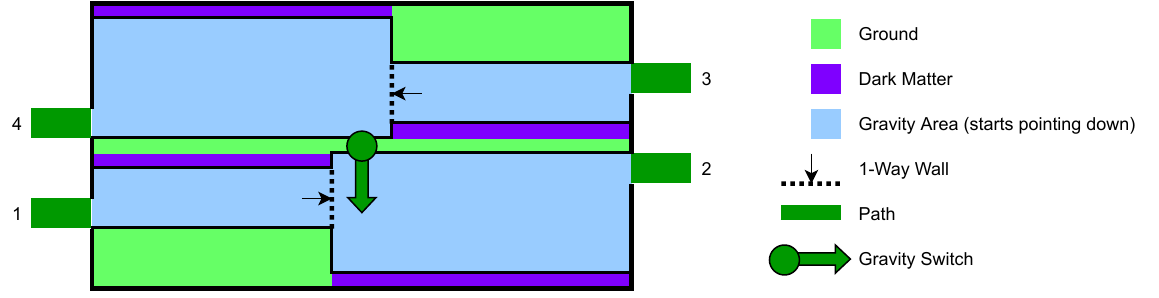}
  \caption{Simulation of a symmetric self-closing door in Super Mario Galaxy. This is a side view and is essentially
  2-dimensional.}
  \label{fig:smg-door}
\end{figure}

\subsection{Super Mario Galaxy 2}
\label{sec:super-mario-galaxy-2}
Super Mario Galaxy 2 is the sequel to Super Mario Galaxy (also for the Wii)
which features new galaxies. Although similar in gameplay, each game has some objects which do not appear in the other. This reduction is very similar to that in Section~\ref{sec:super-mario-sunshine} with the Lily Pad and sludge.

The Leaf Raft is a raft that floats on water and that can be moved by standing on its edge. Lava is an environmental hazard which damages Mario. Unrealistically, we can have a thin layer of water on top of a layer of lava. Finally, an electric fence is another environmental hazard which damages Mario but will allow the Leaf Raft to pass through it.

\begin{theorem} Collecting a Power Star in a Super Mario Galaxy 2 galaxy is \PSPACE-hard assuming no galaxy size limits.
\end{theorem}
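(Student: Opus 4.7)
The plan is to follow essentially the same pattern as the Super Mario Sunshine reduction in Section~\ref{sec:SuperMarioSunshine}, reducing from 1-player motion planning with the symmetric self-closing door (Corollary~\ref{thm:AllSelfClosingPspace}) and designating the target to collect as a Power Star. The construction will build a single symmetric self-closing door gadget; the gadgets are then wired together in the natural way by connecting corridors, which is possible without crossings because our source problem is planar and Super Mario Galaxy 2 supports generic 3D layouts (making crossovers trivial even if we did not have a planar source result).

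The gadget itself will be a direct analogue of \figref{fig:sms-door}, substituting the Super Mario Galaxy 2 mechanics for the Super Mario Sunshine ones. Specifically, I would replace the sludge with lava, the thin water above the sludge with a thin layer of water on top of the lava (which the game's unrealistic layering permits), the Lily Pad with a Leaf Raft, and the thin slits with electric fences. The thin water layer keeps the Leaf Raft afloat and prevents Mario from just walking across, while the electric fences let the raft drift through but damage Mario if he tries to follow, playing the same role as the slits too narrow for bulky Mario in the Sunshine construction. One-way walls, which exist in the Mario series broadly and in particular can be realized in Super Mario Galaxy 2 via ledges or gravity tricks, enforce the direction of the traversals so that Mario cannot abuse a partially opened state.

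The forward behavior of the gadget is then identical to the Sunshine case. Entering from port~1, Mario crosses the one-way wall, boards the Leaf Raft, paddles it across the lava/water strip, and disembarks at port~2, leaving the raft partially positioned across the electric fence so that it is reachable from the other side. This simultaneously opens the port~3 to port~4 traversal and closes the port~1 to port~2 traversal, exactly as required for a symmetric self-closing door. The port~3 to port~4 traversal is symmetric. Mario cannot tow the raft back because the one-way walls block the return, and he cannot jump the gap without the raft because the lava strip is too wide.

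The main obstacle I anticipate is not logical but physical: I need to verify that the Leaf Raft can actually be parked so that it straddles an electric fence (so that it is still boardable from the other side without Mario needing to touch the fence), and that Mario cannot exploit spin attacks, long jumps, or wall jumps to skip the raft entirely across the lava strip. I would handle these by making the lava strip arbitrarily wide and the surrounding corridors too wide for wall jumping (as already done in the earlier Mario reductions), and by placing the electric fence so that the raft's edge protrudes on both sides. Once these geometric parameters are set, correctness of the gadget reduces to the same case analysis as in \figref{fig:sms-door}, and PSPACE-hardness follows from Corollary~\ref{thm:AllSelfClosingPspace}.
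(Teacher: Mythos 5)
Your construction is the same as the paper's: the identical substitution of Leaf Raft for Lily Pad, lava-under-thin-water for sludge-under-thin-water, and electric fences for the narrow slits, reducing from the symmetric self-closing door exactly as in \figref{fig:smg2-door}. However, there is one concrete gap in your argument: you treat lava and electric fences as impassable barriers (``damage Mario if he tries to follow''), but in Super Mario Galaxy 2 both hazards only \emph{damage} Mario rather than kill him --- he bounces off lava and can deliberately shock-boost through an electric fence at the cost of one unit of health. With his normal health bar, Mario could therefore tank his way across an electric fence into the section between ports 3 and 4, or potentially bounce partway across the lava, breaking the gadget. Widening the lava strip and positioning the raft's edge, as you propose, does not fix the electric-fence problem. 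The paper closes this hole by placing the Power Star under a Daredevil Comet, which reduces Mario to 1 HP so that any contact with lava or an electric fence is fatal; some equivalent device is needed before your case analysis ``reduces to the same case analysis as in the Sunshine gadget.''
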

\begin{proof}
We reduce from 1-player motion planning with the symmetric self-closing door (\thmref{thm:planar-scd-pspace}), where the target to collect is a Power Star.
\figref{fig:smg2-door} shows the simulation of a symmetric self-closing door. The Power Star is under a Daredevil Comet, making Mario have
only 1 HP, so he cannot afford to bounce in the lava or shock-boost through an electric fence.

 Mario goes from $1$ to $2$ and opens
the $3 \to 4$ traversal by crossing
the one-way wall and riding the Leaf Raft across, then carefully making the Leaf Raft partially cross the electric fence.
He cannot move to the section between $3$ and $4$ because of the electric fences. The lava is too long to simply jump to the other
side, so the Leaf Raft is needed. Mario cannot do anything from $2$ because the one-way wall blocks him from going to $1$.
Mario goes from $3$ to $4$ in a similar manner.
\end{proof}

\begin{figure}
  \centering
  \includegraphics[scale=1]{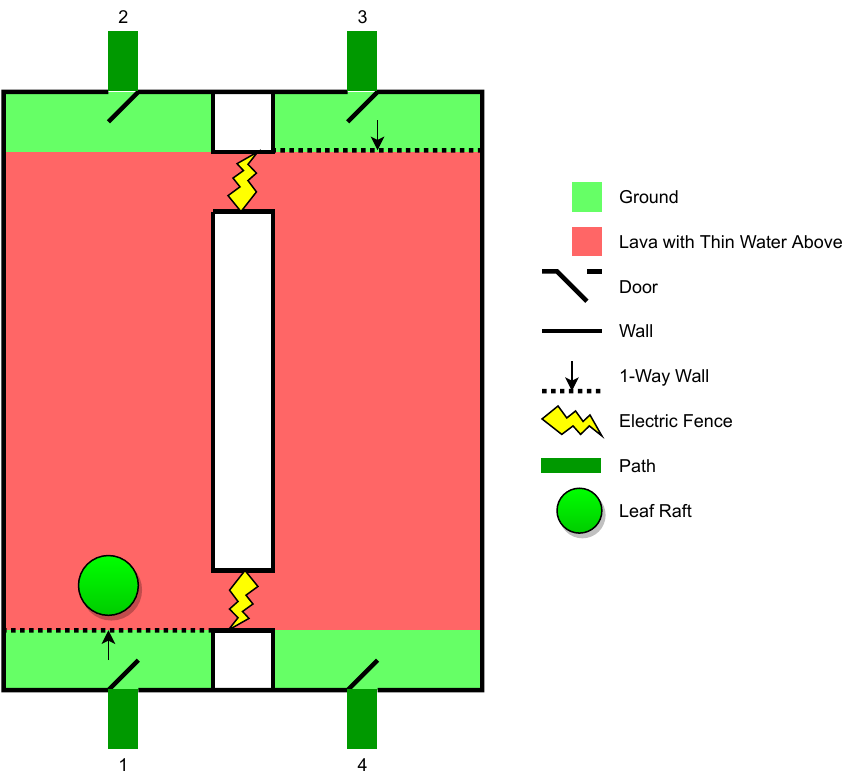}
  \caption{Simulation of a symmetric self-closing door in Super Mario Galaxy 2. The thin water above the lava allows the Leaf Raft
  to float but does not allow Mario to swim in it without taking damage from the lava beneath. The walls cannot be wall jumped on (Super Mario Galaxy 2
  allows vertical walls that cannot be wall jumped on).}
  \label{fig:smg2-door}
\end{figure}

\subsection{Super Mario 3D Land and Super Mario 3D World}
\label{sec:super-mario-3d-land-world}
Super Mario 3D Land and Super Mario 3D World are 3D Mario games
for the 3DS and Wii U, respectively,
that are based on the New Super Mario Bros.\ series instead of earlier
3D Mario games. Instead of collecting Shine Sprites or Stars, the player traverses a level to reach the flagpole at the end. In addition,
the player does not have a health bar but loses their powerup or dies when taking damage. Levels have time limits.

The Switchboard is a platform that rides on tracks and contains two arrows. If Mario steps on an arrow, the Switchboard goes in the direction of said arrow. Otherwise, the Switchboard does not move. In Super Mario 3D World, the Switchboard can be controlled by using the Wii U gamepad, but only if the Switchboard is visible. We also make use of a pit deep enough that Mario cannot jump out.

\begin{theorem} Reaching the flagpole at the end of a Super Mario 3D Land/World level is \PSPACE-hard assuming no level size limits
and no time limit.
\end{theorem}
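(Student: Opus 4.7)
The plan is to reduce from 1-player motion planning with the symmetric self-closing door, which is \PSPACE-hard by \thmref{thm:AllSelfClosingPspace}, letting the target location in the motion planning instance correspond to the level's flagpole at the end of a long path constructed from door gadgets wired together. Since the self-closing door framework does not require crossovers (we can use any planar embedding by the planar results, though 3D connection is trivially easy), it suffices to implement a single symmetric self-closing door gadget using the available Super Mario 3D Land/World mechanics.

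The core construction will use the Switchboard as the bistable state: a Switchboard is placed on a short horizontal track spanning a deep pit that Mario cannot escape from, so the Switchboard's two resting positions (one end of the track or the other) encode the two states of the door. Two disjoint \emph{tunnels} (entry--exit paths) for Mario are arranged so that each tunnel crosses the pit at a different location; in state $A$ the Switchboard bridges only the first crossing, and in state $B$ it bridges only the second. The arrows on the Switchboard are oriented so that when Mario walks across it to complete a traversal of tunnel $1$, he is forced to step on the arrow pushing the Switchboard toward the opposite end of the track, which simultaneously closes tunnel $1$ (removing the bridge from its crossing) and opens tunnel $2$ (installing the bridge at the other crossing); tunnel $2$ is symmetric. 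This exactly realizes the symmetric self-closing behavior: each traversal is possible only in the matching state and flips the state.

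The key steps are: (i) describe the pit, track, and Switchboard geometry and check that the Switchboard cannot be moved without a traversal, since it only moves while Mario stands on an arrow (and Mario cannot reach the Switchboard except by entering a tunnel); (ii) verify that each tunnel traversal forces a full push of the Switchboard from one end of the track to the other, so partial traversals (entering a tunnel, nudging the Switchboard, and returning) either are impossible or leave the state unchanged, meaning the player has no incentive to attempt them (using the simulation-with-disincentives notion of \cite{gadgets} invoked earlier in the paper); (iii) confirm that Mario cannot bypass the gadget by jumping across the pit (choose the pit wide and deep enough, noting that the analogous issue was handled in \figref{fig:smg2-door}) or by using any global powerup state, which we control by keeping Mario small throughout the level.

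The main obstacle will be the Super Mario 3D World gamepad feature, which lets the player move a visible Switchboard remotely and thus potentially change state without traversing a tunnel; I would handle this by enclosing the Switchboard inside walls or a ceiling that block the camera/line of sight from any external point of the level, so the Switchboard is never visible from outside the gadget, rendering the gamepad control inapplicable. The remaining assumptions (no size limit, no time limit) are used exactly to allow arbitrarily many copies of the gadget and arbitrarily long play, and the absence of enemies or other global state inside the gadget ensures that the local simulation composes correctly across the full planar arrangement of doors produced by the reduction.
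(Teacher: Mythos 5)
Your proposal is correct and takes essentially the same approach as the paper: a reduction from 1-player motion planning with the symmetric self-closing door (Theorem~\ref{thm:AllSelfClosingPspace}), implemented by a Switchboard riding a track over an inescapable pit, whose two resting positions encode the gadget's two states and whose forced arrow presses during a crossing flip the state. The details you flag---preventing state changes without a full traversal, the pit being too wide to jump, and neutralizing the Wii~U gamepad's remote Switchboard control by hiding it behind walls---all correspond to concerns the paper's construction also handles (one-way tunnels, pit width, and the wall the Switchboard passes through).
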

\begin{proof}
We reduce from 1-player motion planning with the symmetric self-closing door (\thmref{thm:planar-scd-pspace}), where the target to reach is
the flagpole. Figures~\ref{fig:sm3dl-door} and~\ref{fig:sm3dl-door-side} show the simulation of a symmetric self-closing door.

In the setup below, Mario goes from $1$ to $2$ and opens the $3 \to 4$ traversal by going through the tunnel,
then riding the Switchboard to the other side
making sure it goes through the wall, then going
through the tunnel on the other side. Mario cannot move the Switchboard to the other side and then leave via $1$ because the pit
is too wide and the Switchboard cannot be moved without going through the (one-way) tunnel because it is blocked by a wall. If the Switchboard
is on the wrong side, it cannot be moved either backward (because the path stops) or forward (because a wall then blocks the way). This
ensures that Mario can go from $1$ to $2$ if and only if the Switchboard is already at $1$, and then the Switchboard must stay at $2$/$3$.
Mario goes from $3$ to $4$ in a similar manner.
\end{proof}

\begin{figure}
  \centering
  \includegraphics[scale=1]{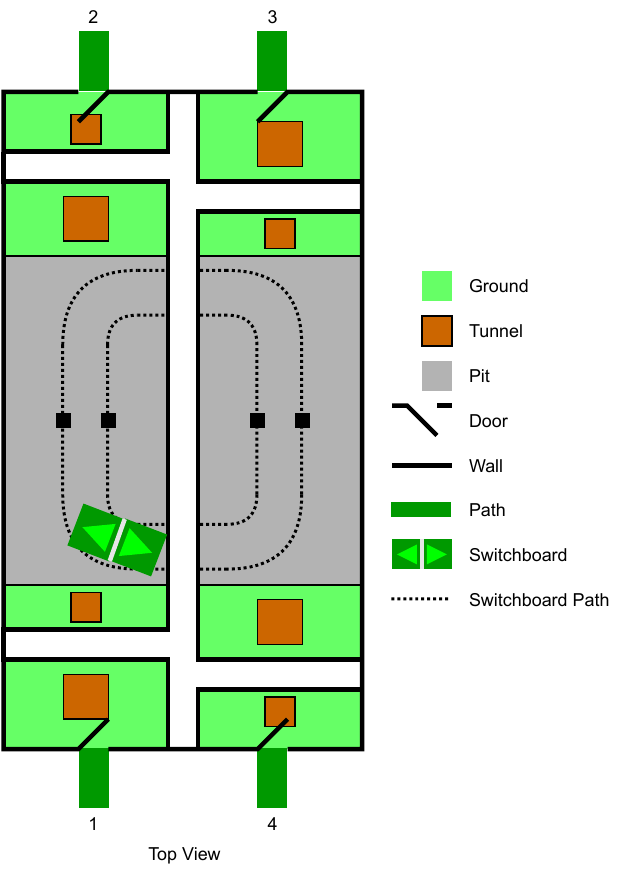}
  \caption{Top view of the simulation of a symmetric self-closing door in Super Mario 3D Land/World. The pit is long enough for the
  player to not be able to jump from the ground to anywhere near the center of the pit. The wider sides of the tunnels are wide enough to
  not allow wall jumping, making the tunnels one-way. The hallways are also too wide to wall jump across.}
  \label{fig:sm3dl-door}
\end{figure}

\begin{figure}
  \centering
  \includegraphics[scale=1]{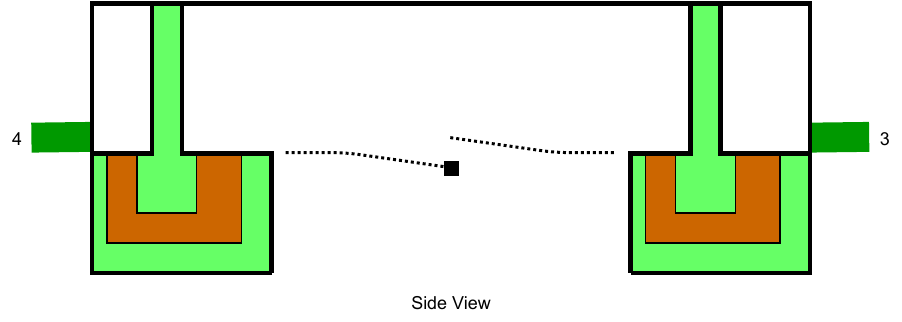}
  \caption{Side view of the simulation of a symmetric self-closing door in Super Mario 3D Land/World.}
  \label{fig:sm3dl-door-side}
\end{figure}

\subsection{Super Mario Odyssey}
\label{sec:super-mario-odyssey}
Super Mario Odyssey is a 3D Mario game for the Switch
where Mario travels to different kingdoms collecting Power Moons and eventually goes to the Moon.
Mario has the ability (via his hat Cappy) to capture certain enemies and objects to use their powers, but such objects tend to reset their position after being uncaptured,
so we will not be using them here.

We make use of a Jaxi, poison, and timed platforms. A Jaxi is a statue lion that can be ridden safely across poison, which is a hazard that kills Mario.
A timed switch makes some event happen for a specific amount of time.
In our reduction, timed switch X makes platform X appear for just long enough
for Mario to make a traversal.

\begin{theorem} Collecting a Power Moon in a Super Mario Odyssey kingdom is \PSPACE-hard assuming no kingdom size limit.
\end{theorem}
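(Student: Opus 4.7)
The plan is to reduce 1-player motion planning with a symmetric self-closing door, which is \PSPACE-hard by \thmref{thm:AllSelfClosingPspace}, to collecting a Power Moon in a Super Mario Odyssey kingdom, exactly mirroring the strategy used in Section~\ref{sec:SuperMarioSunshine} and the subsequent Mario subsections. I would place a Power Moon at the target location of the motion planning instance; the construction then only needs to realize a single symmetric self-closing door gadget and planar corridor wiring between gadgets.

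The intended gadget uses the Jaxi as the shared persistent state. I would arrange a wide pool of poison separating two rooms, with a Jaxi docked on one of the two banks. Ports 1 and 4 would open onto the ``home'' bank of the poison, while ports 2 and 3 would open onto the ``far'' bank. A traversal from port 1 to port 2 would force the player to mount the Jaxi, ride across the poison, and dismount on the far side, leaving the Jaxi parked there. Once the Jaxi is on the far side, the only way to recross the poison is a port 3 to port 4 traversal, which remounts the Jaxi and returns it to the home bank. Because each traversal leaves the Jaxi in the opposite dock, exactly one of the two traversals is available at any time, which is precisely the behavior of a symmetric self-closing door.

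To enforce correctness, I would use one-way walls (already established earlier in this section) at each of the four port corridors so that, once Mario enters from a port, he cannot simply back out and enter a different port without crossing the poison via the Jaxi. I would size the poison pool and its enclosing walls conservatively so that no combination of jumping, diving, rolling, or wall kicking lets Mario reach the opposite bank unaided. The Jaxi's track would be entirely walled in so the Jaxi cannot be ridden out of the gadget; and because capturing an object with Cappy resets its position on dismount, Cappy captures cannot be used to smuggle the Jaxi past a one-way wall. Timed switches and platforms, while available in Super Mario Odyssey, will not be needed because the Jaxi already provides persistent state.

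The main obstacle, as in the other 3D Mario reductions, is ruling out the unusually rich Super Mario Odyssey move set: long jumps, dive rolls, ground pound jumps, Cappy bounce jumps, and captures of nearby enemies all need to be blocked from providing shortcuts across the poison or from relocating the Jaxi. I would handle this by choosing chamber geometry and moat dimensions large enough to exceed the game's maximum unaided horizontal and vertical traversal distances, and by leaving no capturable enemies inside the gadget other than the Jaxi itself (whose captured motion is confined to its track). Once these geometric guarantees are in place, each simulated state of the door transitions as required, and the reduction from \thmref{thm:AllSelfClosingPspace} yields the claimed \PSPACE-hardness.
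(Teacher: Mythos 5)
Your overall framing matches the paper exactly: reduce from 1-player motion planning with the symmetric self-closing door via \thmref{thm:AllSelfClosingPspace}, place a Power Moon at the target, and build a single door gadget in which the persistent state is which bank of a wide poison moat the Jaxi is parked on. That is precisely the state mechanism the paper uses. The gap is in how you enforce the gadget's transition table. First, you lean on ``one-way walls (already established earlier in this section)'' --- but those were established as mechanics of Super Mario 64 and Super Mario Sunshine, not of Super Mario Odyssey; game mechanics do not transfer between titles, and the paper pointedly lists only ``a Jaxi, poison, and timed platforms'' as the Odyssey mechanics it relies on, using timed switches and temporary platforms in place of one-way walls. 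Declaring timed switches ``not needed'' removes exactly the tool the paper uses to get directedness.

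Second, and independently of whether one-way walls exist in Odyssey, your port layout leaks. You put entrance 1 and exit 4 on the home bank and exit 2 and entrance 3 on the far bank, with one-way walls oriented so that entrances admit Mario and exits release him. Then an agent entering at port 1 can simply walk across the home bank and leave through port 4's (outward-permitting) corridor without ever touching the Jaxi, giving a permanent $1\to 4$ transition; symmetrically $3\to 2$ is always available on the far bank. Neither transition exists in the symmetric self-closing door, and such unconditional shortcuts can break the surrounding reduction. The paper's construction prevents exactly this: exit port 2 is reachable only via timed platform A, which exists only briefly after pressing switch A on port 1's approach (and after riding the Jaxi), and port 3 is cut off by a wide gap. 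You would need either a comparable timed/ephemeral mechanism or a geometric separation of same-bank ports (e.g.\ height drops so the Jaxi deposits Mario in a chamber reaching only the correct exit) before the gadget actually realizes the symmetric self-closing door.
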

\begin{proof}
We reduce from 1-player motion planning with the symmetric self-closing door (\thmref{thm:planar-scd-pspace}), where the target to reach is
a Power Moon.
\figref{fig:smo-door} shows the simulation of a symmetric self-closing door.

Mario goes from $1$ to $2$ by pressing timed switch A, riding the Jaxi to the right, and traversing platform A. This
opens the $3 \to 4$ traversal while closing the $1 \to 2$ traversal. Mario cannot
go to $3$ because of the wide gap, or to $4$ because platform B is gone. The Jaxi is required because the poison it is on is
very wide. Mario cannot do anything useful if he tries to enter from $2$ or $4$ because the platforms would be gone. Mario
goes from $3$ to $4$ in a similar manner.
\end{proof}

\begin{figure}
  \centering
  \includegraphics[width=\linewidth]{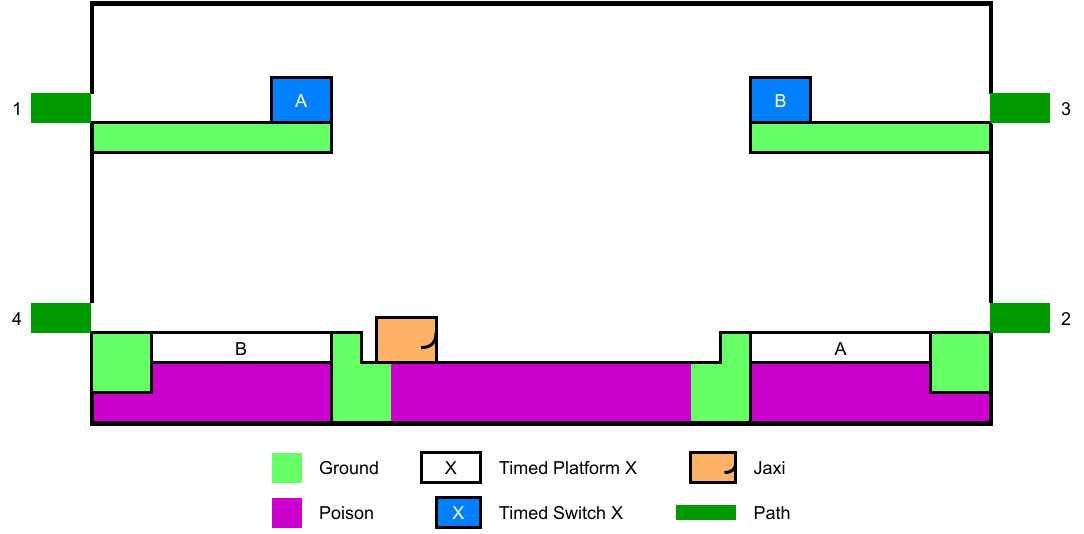}
  \caption{Simulation of a symmetric self-closing door in Super Mario Odyssey. This is a side view and is essentially
  2-dimensional. All strips of poison are way too wide for Mario to cross with his various aerial skills, and the platforms with timed
  switches are too high to get to from below.}
  \label{fig:smo-door}
\end{figure}

\subsection{Captain Toad:\ Treasure Tracker}
\label{sec:captain-toad}
Captain Toad:\ Treasure Tracker is a 3D puzzle platformer in the Mario universe,
originally appearing as a type of level in Super Mario 3D World, and then
released as a stand-alone game on the Wii U and ported to the 3DS and Switch.
Notably, Toad can fall but not jump.
The game contains rotating platforms controlled by a wheel that Toad must be adjacent to in order to move. The platforms move in $90^\circ$ increments. We show \PSPACE-hardness by constructing an antiparallel symmetric self-closing door (Theorem~\ref{thm:planar-scd-pspace}).

\begin{theorem} Collecting Stars in Captain Toad:\ Treasure Tracker is \PSPACE-hard assuming no level size limit.
\end{theorem}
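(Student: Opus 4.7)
The plan is to reduce from 1-player planar motion planning with an antiparallel symmetric self-closing door, which is \PSPACE-hard by \thmref{thm:AllSelfClosingPspace}. The target Star is placed at the destination port of the motion-planning instance; the wires between gadgets become narrow straight corridors in the Captain Toad level, which is straightforward to arrange once the single door gadget is realized.

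The central construction I would build is a rotating platform operated by an adjacent wheel. The platform has two useful orientations $90^\circ$ apart: in state $A$, the platform forms a bridge between ports $1$ and $2$; in state $B$, rotated $90^\circ$, it forms a bridge between ports $3$ and $4$, with the two bridges lying in perpendicular directions so that the corresponding tunnels are antiparallel. When the platform is in state $A$ there is only an impassable pit between ports $3$ and $4$, and vice versa. The wheel sits in a niche reachable only from atop the platform, so Toad can engage it only in the middle of a traversal. Because Toad cannot jump, I would use height drops to enforce direction: each entry port drops Toad down onto the platform, and the far end of the platform drops him into an exit corridor from which there is no way back to the wheel, so each tunnel is one-way.

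The key invariant is that every complete traversal spins the wheel by exactly $90^\circ$, swapping states $A$ and $B$, so the tunnel just used becomes closed while the perpendicular tunnel becomes open. This is forced by placing the wheel's activation square on the unique walk across the platform, with a drop immediately past the wheel so Toad cannot linger to spin it a second time. Attempting to enter a closed tunnel leaves a gap that Toad cannot clear; he either drops into an inescapable pit or is stuck at the entry, in either case failing to reach the exit, which matches the closed-tunnel specification.

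The main obstacle will be ruling out unintended interactions: (i) Toad rotating the wheel multiple times into an over-rotated orientation, (ii) exiting via a wrong port after partial rotation, and (iii) operating the wheel without completing a traversal. Items (i) and (iii) are handled by the single-use geometry: once Toad steps off the platform at the forced far-end drop, walls and the pit separating him from the wheel prevent any further wheel interaction from outside. For (ii), the stone framework surrounding the platform is shaped so that only two of the four $90^\circ$ orientations line up with solid ground at both ends; the other two leave gaps as wide as the uncrossable pit, and the platform cannot stop at intermediate angles since rotations are discrete $90^\circ$ steps. Combining this gadget with the straightforward planar wiring available in Captain Toad levels then completes the reduction and establishes \PSPACE-hardness.
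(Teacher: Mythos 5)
Your high-level plan matches the paper's: reduce from 1-player motion planning with a symmetric self-closing door (Theorem~\ref{thm:AllSelfClosingPspace}), realize the door as a wheel-controlled rotating platform whose orientation encodes the gadget's state, and use Toad's inability to jump (one-way height drops) to enforce tunnel directions. However, your specific gadget geometry has a genuine gap. To operate the wheel, Toad must be standing on (or adjacent from atop) the platform, so when the platform rotates, it rotates out from under the walk you are relying on: after the $90^\circ$ spin, the platform bridges ports $3$ and $4$, not $1$ and $2$, so the claimed ``drop immediately past the wheel into port $2$'s exit corridor'' is incoherent --- if Toad steps off toward port $2$ before spinning he can no longer reach the wheel, and if he spins first the far end of the bridge has swung away from port $2$. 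Moreover, you cannot force a player-controlled Toad to spin the wheel \emph{exactly once}: he can stand by the wheel and activate it arbitrarily many times, and with a two-ended bridge the $180^\circ$ orientation coincides with the $0^\circ$ one (unless you make the platform $180^\circ$-asymmetric, which you do not specify), so Toad could enter at port $1$, spin twice for a net no-op, and exit at port $2$ without closing the tunnel --- destroying the self-closing property.

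The paper's construction avoids both problems by using a U-shaped platform rotated by $180^\circ$ per traversal, suspended slightly below the high ground (entrances) and far above the low ground (exits), with a dividing wall over the platform. Once Toad steps down into the U he cannot climb back out; the only escape is a drop to low ground that is geometrically available only when the U has been turned an odd multiple of two quarter-turns, so \emph{any} sequence of wheel activations that lets Toad leave necessarily ends with the U facing the opposite high-ground entrance. The state flip is thus a consequence of the exit condition rather than of a (unenforceable) bound on how many times the wheel is spun. To repair your version you would need to redesign the platform so that riding the rotation is part of the traversal and so that every escape orientation entails the correct state change, which is essentially what the U-shape accomplishes.
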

\begin{proof}
\figref{fig:ToadSSCD} gives a top-down view of the construction. There is a U-shaped rotating platform at a height slightly below the high ground and far above the low ground. The U-shaped platform rotates counterclockwise and can be reached from the nearby high ground; however, the gap between the back of the U and the other side is too wide for Toad to cross. Further, the dividing wall sits slightly above the rotating platform, preventing Toad from crossing. Toad is able to go onto the U platform from the high ground, activate the gear twice, and step off the U platform onto the low ground across the gap. The U platform is now facing the other way, allowing Toad to enter from the high ground on the other side, but preventing other traversals.
\end{proof}

\begin{figure}
  \centering
  \includegraphics[width=.5\textwidth]{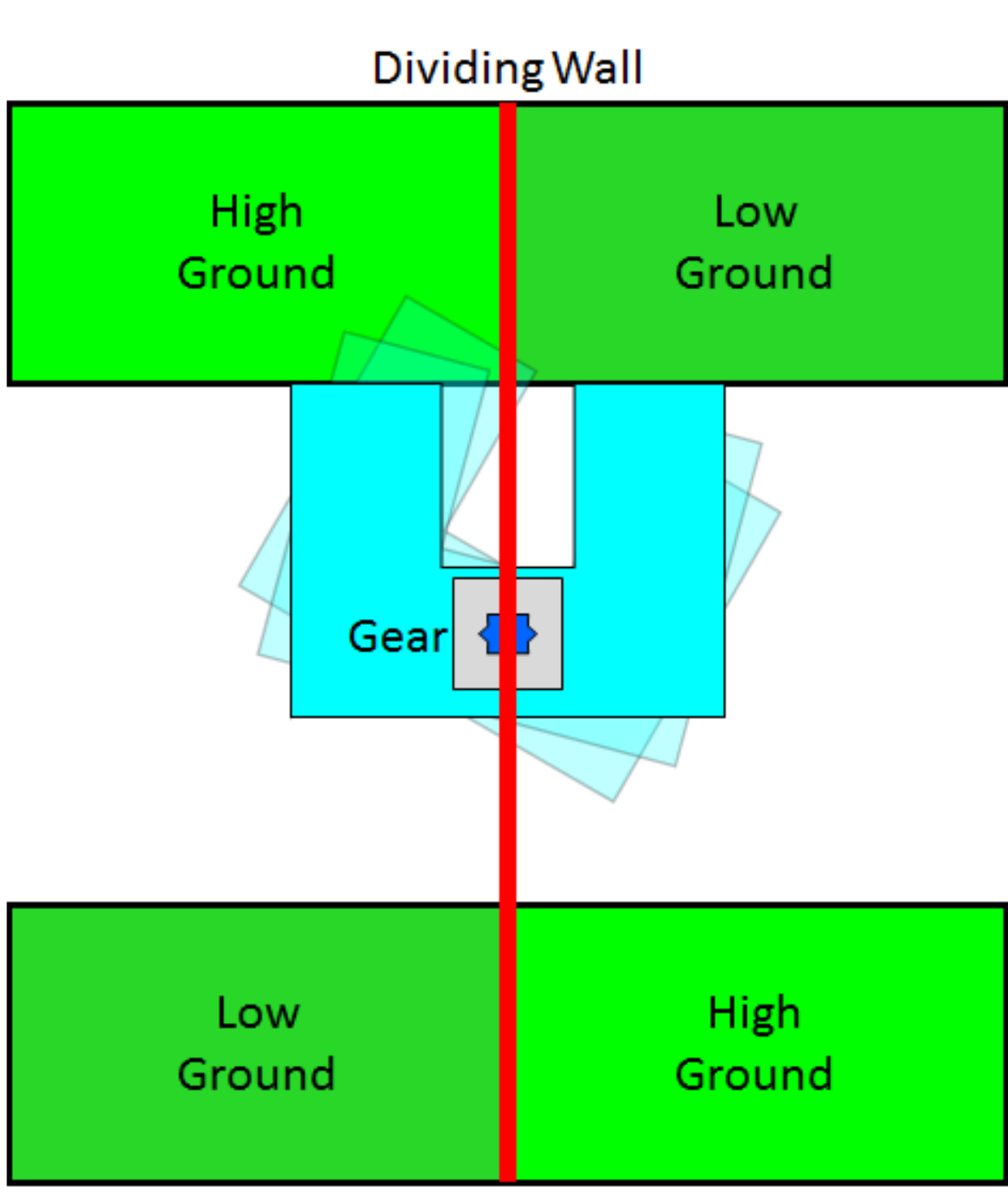}
  \caption{Top view of a simulation of a symmetric self-closing door.}
  \label{fig:ToadSSCD}
\end{figure}

\section{Open Problems}

We mention a few directions for future work.

\paragraph{Symmetric open--close door.}
Define a \defn{symmetric open--close door} to be a gadget with four traversals:
\emph{opening} and \emph{closing} buttons
set the state to open and closed, respectively;
the \emph{traverse} tunnel can be traversed only in the open state; and
the \emph{antitraverse} tunnel can be traversed only in the closed state.
Like the symmetric self-closing door, the symmetric open--close door is symmetric
under an ``open''/``close'' swap.
(We focus on opening and closing buttons because if either were a tunnel,
we could ignore one of the traverse tunnels and obtain an open--close door.)
Clearly the symmetric open--close door is universal:
if we join the closing button to the entrance of the antitraverse
tunnel, then we obtain (the closing tunnel of) an open--close door.
But for which cyclic orderings of the six locations is it planarly universal?

While the symmetric open--close door may seem more complicated,
it is in some cases simpler to build.
Take the Super Mario Bros.\ door \cite[Figure~6]{demaine2016super},
reproduced in \figref{fig:smb-door}.
Arguably, the right half of the gadget serves two distinct functions,
and it is simpler to separate them like in the left half.
The result is the symmetric open--close door of \figref{fig:smb-door-symmetric},
which also happens to be symmetric under a vertical reflection.
It seems more natural to find the symmetric open--close door in this case,
and then derive an open--close door from it.
Thus a more thorough analysis of symmetric open--close doors seems warranted.

\begin{figure}
  \centering
  \footnotesize

  \subcaptionbox{\label{fig:smb-door} Original door from \cite[Figure~6]{demaine2016super}.}{%
    \hspace*{4.1em}%
    \begin{overpic}[scale=0.75]{mario2d/door}
      \put(11,38){\makebox(0,0)[r]{\strut traverse out}}
      \put(11,20.25){\makebox(0,0)[r]{\strut traverse in}}
      \put(11,8.5){\makebox(0,0)[r]{\strut opening button}}
      \put(90,38){\makebox(0,0)[l]{\strut closing out}}
      \put(90,14.25){\makebox(0,0)[l]{\strut closing in}}
    \end{overpic}%
    \hspace*{2.6em}%
  }%
  \hfill
  \subcaptionbox{\label{fig:smb-door-symmetric} Symmetric open--close door.}{%
    \hspace*{4.1em}%
    \begin{overpic}[scale=0.75]{mario2d/door_symmetric}
      \put(11,38){\makebox(0,0)[r]{\strut traverse out}}
      \put(11,20.25){\makebox(0,0)[r]{\strut traverse in}}
      \put(11,8.5){\makebox(0,0)[r]{\strut opening button}}
      \put(90,38){\makebox(0,0)[l]{\strut antitraverse out}}
      \put(90,20.25){\makebox(0,0)[l]{\strut antitraverse in}}
      \put(90,8.5){\makebox(0,0)[l]{\strut closing button}}
    \end{overpic}%
    \hspace*{4.8em}%
  }
  \caption{The Super Mario Bros.\ open--close door (a) can be derived from the symmetric open--close door (b) by joining the closing button to antitraverse in.}
  \label{fig:smb-doors}
\end{figure}

\paragraph{Reflections.}
Our paper,
and all other papers developing the motion-planning-through-gadgets framework,
treat each gadget as the same as its reflection.
This assumption is motivated by applications to games, which typically
do not significantly distinguish between ``left'' and ``right''
(whereas ``up'' and ``down'' are often distinguished, because of gravity),
so if you can build one gadget, you can mirror it through a vertical line
to obtain its reflection.
Nonetheless, it would be interesting to determine whether all door gadgets
are universal even when forbidding reflection.

\paragraph{2-player.}
Zhang \cite{janggi} developed a \emph{2-player} door framework
for proving EXPTIME-hardness of 2-player unbounded games,
specifically Chinese and Korean chess.
They considered the open--close door (basing it on \cite{demaine2016super},
as it predated the present work \cite{doors-fun2020}).
In their framework, each traversal of the door can be colored
according to which player (red or blue) can traverse it.
We could also consider doors where some traversals can be traversed by
either player (purple).
In addition, we can now consider open--close doors where the opening traversal
\emph{optionally} opens the door (up to the player's choice),
and similarly for the closing traversal, as we can no longer assume
that the player always wants to open a door if it can.
Closing-buttoned open--close doors also now make sense.
This opens a whole variety of door gadgets.
Which of them suffice for EXPTIME-hardness?

\section*{Acknowledgments}

This work was initiated during open problem solving in the MIT class on
Algorithmic Lower Bounds: Fun with Hardness Proofs (6.892)
taught by Erik Demaine in Spring 2019.
We thank the other participants of that class
for related discussions and providing an inspiring atmosphere.

\bibliographystyle{alpha}
\bibliography{thebib}

\end{document}